\newtheorem{theorem}{Theorem}
\newtheorem{lemma}[theorem]{Lemma}
\newtheorem{property}[theorem]{Property}
\DeclarePairedDelimiter\floor{\lfloor}{\rfloor}
\begin{document}


\title{The mathematics of non-linear metrics for nested networks}



\author{Rui-Jie Wu, Gui-Yuan Shi, Yi-Cheng Zhang \& Manuel Sebastian Mariani}
\affiliation{Department of Physics, University of Fribourg, 1700 Fribourg, Switzerland}



\begin{abstract}
Numerical analysis of data from international trade and ecological networks has shown that the
non-linear fitness-complexity metric is the best candidate to rank nodes by importance
in bipartite networks that exhibit a nested structure.
Despite its relevance for real networks, the mathematical properties of the metric and its variants remain largely unexplored.
Here, we perform an analytic and numeric study of the fitness-complexity metric and a new variant, called minimal extremal metric.
We rigorously derive exact expressions for node scores for perfectly nested networks and show that these expressions explain
the non-trivial convergence
properties of the metrics.
A comparison between the fitness-complexity metric and the minimal extremal metric on real data
reveals that the latter can produce improved rankings if the input data
are reliable.
\end{abstract}


\maketitle

\onecolumngrid
\section{Introduction}

Network-based iterative algorithms are being applied to a broad range of problems,
such as ranking search results in the WWW \cite{brin1998anatomy}, predicting the traffic in urban roads \cite{jiang2008self},
recommending the items that an online user might appreciate \cite{lu2012recommender}, measuring the competitiveness of countries in world
trade \cite{hidalgo2009building, tacchella2012new},
ranking species according to their importance in plant-pollinator mutualistic networks \cite{allesina2009googling,dominguez2015ranking},
assessing scientific impact \cite{walker2007ranking,radicchi2009diffusion}, 
identifying influential spreaders \cite{ren2014iterative},
and many others.
While linear algorithms are applied to a broad range of
real systems \cite{gleich2015pagerank,ermann2015google},
it has been recently shown that the non-linear fitness-complexity metric introduced in ref. \cite{tacchella2012new}
markedly outperforms linear metrics in ranking the nodes by their importance
in bipartite networks that exhibit a nested architecture \cite{dominguez2015ranking, mariani2015measuring}.
The fitness-complexity metric has been originally introduced to rank countries and products in world trade
according to their level of competitiveness and quality, respectively \cite{tacchella2012new}.
The basic idea of the metric is that while the competitiveness of a country is mostly determined by the diversification of its exports,
the quality of a product is mostly determined by the score of the least competitive exporting countries.
The metric has been shown to be economically well-grounded \cite{tacchella2012new,cristelli2013measuring},
to be highly effective in ranking countries and products by their importance in the network \cite{mariani2015measuring},
to be informative about the future economic development \cite{cristelli2015heterogeneous}
and the future exports of a country \cite{vidmer2015prediction}.
The metric has been recently applied beyond its original scope and has been shown to be
the most efficient method among several network-based methods in ranking species according to their importance
in mutualistic ecological networks \cite{dominguez2015ranking}. In particular,
the metric reveals the nested structure of the system much
better than the methods used by standard nestedness calculators.
Several real systems exhibit a nested 
structure \cite{tacchella2012new, bascompte2003nested, jonhson2013factors, cimini2014scientific, borge2015nested, garas2015network},
which suggests that
the metric has a potentially broad range of application.


Despite the relevance of the fitness-complexity metric for nested networks,
its mathematical properties and its variants remain largely unexplored.
In contrast with linear algorithms such as Google's PageRank \cite{berkhin2005survey,gleich2015pagerank,ermann2015google}
and the method of reflections \cite{caldarelli2012network}, 
the convergence properties of the metric cannot be studied through linear algebra techniques.
This article provides new insights into the mathematics behind the metric.
We study here both the fitness-complexity metric (FCM) and a novel variant, called minimal extremal metric (MEM),
that is simpler to be treated analytically.
The only input of the metrics is the binary adjacency matrix $\mathcal{M}$ of the underlying bipartite network;
we perform here exact computations for perfectly nested matrices, i.e., binary matrices such that a unique border separates all
the elements equal to one from the elements equal to zero.
For both the MEM and the FCM, we find the exact analytic formulas that relate the ratios of node scores
to the shape of the underlying nested matrix.
While real nested matrices are not perfectly nested, the expressions derived here for perfectly nested matrices 
explain the non-trivial convergence properties the metrics found in real matrices \cite{pugliese2014convergence}.
In particular, we analytically determine the condition such that all node scores converge to a nonzero value, which
is crucial for the discriminative power of the metrics.
This condition has been also found in ref. \cite{pugliese2014convergence} (the only previous work that studied the convergence properties of the 
FCM); differently from the
analytic study of ref. \cite{pugliese2014convergence} where exact formulas were derived for matrices with two values of node score, 
in this work we derive by mathematical induction expressions valid for any perfectly nested matrix.

Finally, we contrast the two metrics in real data and show that the MEM can outperform the FCM in packing the adjacency matrix,
i.e., ordering its rows and columns in such a way that a sharp curve separates the occupied and empty regions of the matrix \cite{dominguez2015ranking}.
On the other hand, the MEM is more sensitive to noisy data, and, as 
a consequence, its rankings may be unreliable in the presence of a significant amount of mistakes in the 
original data \cite{battiston2014metrics}.

This article is organized as follows: In section \ref{algorithms}, we define the Fitness-Complexity metric (FCM) and the Minimal Extremal Metric (MEM);
In section \ref{compute}, we analytically compute the ratios between MEM and FCM node scores for perfectly nested matrices
and discuss the dependence of the metrics' convergence properties on the shape of the nested matrix;
In section \ref{realdata}, we compare the rankings by the FCM and the MEM in real data of world trade.

\section{Non-linear metrics for bipartite networks}
\label{algorithms}

In this section, we define the fitness-complexity metric (FCM) and the minimal extremal metric (MEM)
for bipartite networks.
While the results obtained in this paper hold for any nested matrix,
we use the terminology of economic complexity: rows and columns of the $N\times M$ adjacency matrix $\mathcal{M}$ are referred to as
countries and products, respectively; the matrix $\mathcal{M}$ is consequently referred to as 
the country-product matrix \cite{hidalgo2009building}.
We label countries by Latin letters ($i=1,\dots,N$),
products by Greek letters ($\alpha=1,\dots,M$); the number of countries and products are denoted by $N$ and $M$, respectively.
The number of products exported by country $i$ and the number of countries that export product
$\alpha$ are referred to as the diversification $D_{i}$ of country $i$ and the ubiquity $U_{\alpha}$
of product $\alpha$, respectively \cite{hidalgo2009building}.

In the fitness-complexity metric (FCM),
the fitness scores $\mathbf{F}=\{F_{i}\}$ of countries and complexity scores $\mathbf{Q}=\{Q_{\alpha}\}$ of products are
defined as the components of the fixed point of the following non-linear map \cite{tacchella2012new}
\begin{equation}
\begin{split}
  \tilde{F}_{i}^{(n)}&=\sum_{\alpha}\mathcal{M}_{i\alpha}Q_{\alpha}^{(n-1)},\\
  \tilde{Q}_{\alpha}^{(n)}&=\frac{1}{\sum_{i}\mathcal{M}_{i\alpha}\frac{1}{F^{(n-1)}_{i}}},
\end{split}
\label{fit_comp}
\end{equation}
where scores are normalized after each step according to
\begin{equation}
\begin{split}
  F_{i}^{(n)}=\tilde{F}_{i}^{(n)}/\,\overline{F^{(n)}},\\
  Q_{\alpha}^{(n)}=\tilde{Q}_{\alpha}^{(n)}/\,\overline{Q^{(n)}},
\end{split}
\end{equation}
with the initial condition $F_{i}^{(0)}=1$ and $Q_{\alpha}^{(0)}=1$.

Eq. \eqref{fit_comp} implies that the largest contribution to the complexity $Q$ of a product $\alpha$
is given by the fitness of the least-fit exporter
of product $\alpha$.
On the other hand, also the fitness scores of the other exporting countries contribute to $Q_{\alpha}$;
in this sense, the FCM is a quasi-extremal metric \cite{cristelli2013measuring}.
A natural question arises: how would the rankings change when modifying Eq. \eqref{fit_comp}, without
changing the main idea behind the metric?
A generalized version of the metric where the harmonic terms $1/F$ are replaced by $1/F^{\gamma}$, with $\gamma>0$, has 
been introduced in ref. \cite{pugliese2014convergence} and studied in refs. \cite{pugliese2014convergence,mariani2015measuring}.
Here, we introduce a simpler variant of the algorithm, called minimal extremal metric (MEM), where the complexity of a product is equal to the fitness
of the least-fit country that exports product $\alpha$.
This metric is extremal, which means that only $\min_{i:\mathcal{M}_{i\alpha}=1}{\{F_{i}^{(n)}\}}$ contributes to $Q_{\alpha}$.
In formulas
\begin{equation}
\begin{split}
  \tilde{F}_{i}^{(n)}&=\sum_{\alpha}\mathcal{M}_{i\alpha}Q_{\alpha}^{(n-1)},\\
  F_{i}^{(n)}&=\tilde{F}_{i}^{(n)}/\overline{F^{(n)}},\\
  Q_{\alpha}^{(n)}&=\min_{i:\mathcal{M}_{i\alpha}=1}{\{F_{i}^{(n)}\}}.
\end{split}
\label{mem}
\end{equation}
Note that the generalized FCM studied in Ref. \cite{mariani2015measuring} reduces to the MEM in the limit $\gamma\to\infty$.

\section{Analytic results}
\label{compute}

\begin{figure}[t]
 \centering
 \includegraphics[height=0.33\columnwidth,  angle=0]{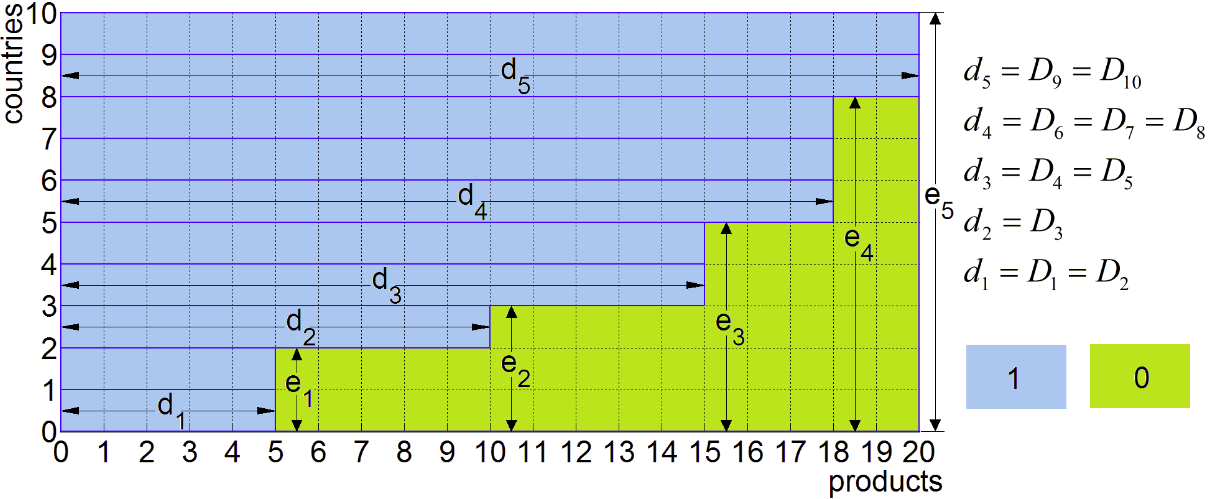}
 \caption{Illustration of the geometrical meaning of the variables $D,d,e$
 in a $10\times 20$ perfectly nested matrix. In this example, there are $m=5$ groups of countries, which correspond to the diversification values
 $d_{1},\dots,d_{5}$,
 and $m=5$ groups of products, which correspond to the ubiquity values $N,N-e_{1},\dots,N-e_4$.
 Due to the perfectly nested structure of the matrix, the groups of countries and products are in one-to-one correspondence: 
 the countries that belong to group $i$ are the least-fit exporters
 of the products belonging to group $i$, i.e., of the products whose ubiquity is $N-e_{i-1}$.
 \label{fig:matrix_explained}}
\end{figure}

\subsection{Perfectly nested matrix}
\label{pnm_def}

We focus here on perfectly nested matrices \cite{ulrich2007disentangling}, i.e., binary matrices 
where each country
exports all those products that are also exported by the less diversified countries plus a set of additional products.
Perfectly nested matrices are also known as stepwise matrices \cite{konig2011network}, and networks with a perfectly nested adjacency matrix
are also referred to as threshold networks \cite{hagberg2006designing}.
An example of perfectly nested matrix is shown in Fig. \ref{fig:matrix_explained}.
In the following, we label countries and products in order of increasing diversification ($D_{i+1}\geq D_{i}$)
and decreasing ubiquity, respectively ($U_{\alpha+1} \leq U_{\alpha}$).
We denote by $\Delta_{i}:=D_{i}-D_{i-1}$ the number of additional products that are exported by country $i$ but not by country $i-1$,
with $\Delta_1=D_1$.

According to Eqs. \eqref{fit_comp} and \eqref{mem}, countries with the same level of diversification have the same fitness score,
and it is thus convenient to group them together.
By doing this, we obtain $m$ groups of countries, with $m\leq N$; we denote by $d_{i}$ the diversification of countries that belong to group $i$,
where groups are labeled in order of increasing diversification and $i=1,\dots,m$. 
In addition, we denote by $e_i$ the number of countries whose diversification is smaller or equal than $d_{i}$.
This notation will turn out to be useful for the computations for the FCM.
We also define the number $\delta_i:=d_{i}-d_{i-1}$ of additional products that are exported by countries that belong to group $i$
but not by those belonging to group $i-1$, and the number $\epsilon_i:=e_{i}-e_{i-1}$ of countries that belong to group $i$ ($i=1,\dots,m$, and $e_0=d_0=0$).
Also products are divided into $m$ groups according to their level of ubiquity.
Since the number of country and product groups are the same and are equal to $m$, we use Latin letters ($i=1,\dots,m$) to label both groups.
Product groups are labeled in order of decreasing ubiquity; we denote by
$u_{i}=N-e_{i-1}$ the ubiquity of the products that belong to group $i$.
The geometrical interpretation of the variables $d,D,e$ is shown in Fig. \ref{fig:matrix_explained}.
Note that country and product groups are in one-to-one correspondence: countries that belong to group $i$ are the least-fit exporters of 
the products that belong to group $i$.

\subsection{Results for the MEM}
\label{mem_results}

For a perfectly nested matrix, the fitness of a country $i+1$ at iteration $n$ is given by the fitness
of country $i$ at iteration $n$, plus the complexity of the additional products that are exported by country $i+1$ but not by country $i$;
for the MEM, this property reads
\begin{equation}
  \tilde{F}_{i+1}^{(n+1)}=  F_{i}^{(n+1)} + F_{i+1}^{(n)}\times \Delta_{i+1},
  \label{fit_iter}
\end{equation}
where $F_{i+1}^{(n)}$ is the complexity of the additional products.
Our aim is to compute the ratios between the fitness scores.
We start by considering the two least-fit countries and compute the ratio $F_{1}^{(n+1)}/F_{2}^{(n+1)}$ between their scores.
Since we are only interested in the ratios between the fitness values, we do not normalize the variables $F$, $Q$ in the computation;
we have then $F^{(n)}_1=\Delta_{1}^{n}$ and, starting from Eq. \eqref{fit_iter}
\begin{equation}
\begin{split}
  \tilde{F}_{2}^{(n+1)} &= F_{1}^{(n+1)} + \Delta_{2}\times F_{2}^{(n)}\\
			&= F_{1}^{(n+1)} + \Delta_{2}\times (F_{1}^{(n)} + \Delta_{2} \times F_{2}^{(n-1)}) \\
			&= F_{1}^{(n+1)} + \Delta_{2}\times F_{1}^{(n)} + \Delta_{2}^{2} \times (F_{1}^{(n-1)} + \Delta_{2}\times F_{2}^{(n-2)})\\
			&= \Delta_{1}^{n+1} + \Delta_{2}\times \Delta_{1}^{n}+\Delta_{2}^{2}\times\Delta_{1}^{n-1}+\dots+\Delta_{2}^{n}\times\Delta_{1},
\end{split}
\label{rearranging}
\end{equation}
which can be rewritten as
\begin{equation}
\begin{split}
  \frac{F_{1}^{(n+1)}}{F_{2}^{(n+1)}}=\frac{1}{1+\frac{\Delta_{2}}{\Delta_{1}}+(\frac{\Delta_{2}}{\Delta_{1}})^{2}+\dots+(\frac{\Delta_{2}}{\Delta_{1}})^{n}}.
 \end{split}
\label{ratio}
\end{equation}
If $\Delta_{1}=\Delta_{2}$, 
\begin{equation}
\begin{split}
  \frac{F_{1}^{(n+1)}}{F_{2}^{(n+1)}}=\frac{1}{n+1} \underset{n\to\infty}{\longrightarrow} 0:
 \end{split}
\label{ratio3}
\end{equation}
the ratio converges to zero as $1/n$.
The ratio converges to zero also if $\Delta_{2}>\Delta_{1}$, but with an exponential rate:
\begin{equation}
\begin{split}
  \frac{F_{1}^{(n+1)}}{F_{2}^{(n+1)}}\simeq \Biggl(\frac{\Delta_{1}}{\Delta_{2}} \Biggr)^{n}  \underset{n\to\infty}{\longrightarrow} 0.
 \end{split}
\label{ratio2}
\end{equation}
By contrast, using the geometric series we can show that the ratio is finite if $\Delta_{2}<\Delta_{1}$:
\begin{equation}
\begin{split}
  \frac{F_{1}^{(n+1)}}{F_{2}^{(n+1)}} \underset{n\to\infty}{\longrightarrow} 1-\frac{\Delta_{2}}{\Delta_{1}}.
 \end{split}
\label{ratio1}
\end{equation}
Interestingly, the three different asymptotic behaviors \eqref{ratio3}, \eqref{ratio2}, and \eqref{ratio1} 
correspond to the asymptotic behaviors found in ref. \cite{pugliese2014convergence} for the FCM fitness scores 
with a model matrix where there are only two values $F_{1}$ and $F_{2}$ of fitness score.
We will now use this result as the starting point of
a rigorous derivation of the analytic expression for the fitness ratios in an arbitrary perfectly nested matrix.

First, we note that Eqs. \eqref{ratio1} and \eqref{ratio2} can be summarized as
\begin{equation}
\lim_{n\to\infty}\frac{F_{1}^{(n)}}{F_{2}^{(n)}}=  1-\frac{\Delta_{2}}{\max{\{\Delta_{1},\Delta_{2}\}}}.
\label{solution_minimal_prov}
\end{equation}
Starting from Eq. \eqref{fit_iter} and using mathematical induction, we can show that (see Appendix A)
\begin{equation}
\lim_{n\to\infty}\frac{F^{(n)}_{i}}{F^{(n)}_{i+1}}=1-\frac{\Delta_{i+1}}{\max_{j\in[1,i+1]}{\{\Delta_{j}\}}}.
\label{solution_minimal}
\end{equation}
Note that Eq. \eqref{solution_minimal} relates the score ratios $F^{(n)}_{i}/F^{(n)}_{i+1}$ 
to the shape of the perfectly nested matrix, which is encoded in the set of the $\Delta$ values.
Eq. \eqref{solution_minimal} implies that for any
perfectly nested matrix $\mathcal{M}$, all MEM fitness scores converge to a
nonzero value if and only if $\Delta_{i}<\Delta_{1} \, \forall \, i>1$.
If the gap $\Delta_{i+1}$ between the diversifications of countries $i$ and $i+1$ is the largest gap among the gaps $\Delta_j$ of the countries $j\leq i+1$,
then the ratio between the score of country $i$ and the score of country $i+1$ converges to zero.
The derivation of Eq. \eqref{solution_minimal} is a first example of how the behavior of non-linear metrics can be completely understood for
perfectly nested matrices;
in the next section, we will derive an analogous expression for the FCM.

Eq. \eqref{solution_minimal} suggests that for a matrix that is not too different from a perfectly nested matrix,
the score ratios could be used to assess the convergence of the metric. 
In particular, one can decide to halt iterations when the following criterion is met:
\begin{equation}
 \sum_{i=1}^{N} \Bigg| \frac{F_{i}^{(n)}}{F_{i+1}^{(n)}}-\frac{F_{i}^{(n+1)}}{F_{i+1}^{(n+1)}}\Bigg|<\epsilon,
 \label{convergence_criterion_main}
\end{equation}
where $\epsilon$ is a predefined accuracy threshold.
We refer to \ref{convergence} for the results of the application of this criterion to real data,
and to \ref{complexity}
for a numerical study of the dependence of the convergence iteration on the size of the system.
As first suggested by ref. \cite{pugliese2014convergence},
if some score ratios converge to zero, countries can be naturally separated in different groups for which all fitness ratios
converge to a nonzero value within a set. We refer to \ref{submatrices} for a real example from world trade
of country separation implied by the existence of zero fitness ratios.

\subsection{Results for the FCM}

The FCM score of a certain product is determined by the scores of all the exporting countries,
which makes the analytical computations for the FCM more difficult than those for the MEM.
For the computations with the FCM, it is convenient to group together countries with the same level of diversification.
In agreement with the definitions provided in paragraph \ref{pnm_def}, we denote by $f_{i}$ the fitness of countries that belong to group $i$, 
i.e., of those countries whose diversification is equal to $d_{i}$.
Analogously, we denote by $q_{i}$ the complexity of the products whose least-fit exporting countries belong to group $i$.
We have then $m$ fitness scores $\{f^{(n)}_{1},\dots, f^{(n)}_{m}\}$ and $m$ complexity scores $\{q^{(n)}_{1},\dots, q^{(n)}_{m}\}$.
With this notation, we rewrite Eq. \eqref{fit_comp} as
\begin{equation}
 \begin{split}
  \tilde{f}_{i}^{(n)}&=\sum_{j=1}^{i}\delta_{j}\,q_{j}^{(n-1)},\\
  \tilde{q}_{i}^{(n)}&=\frac{1}{\sum_{j=i}^{m}\epsilon_j/f^{(n)}_{j}},
\end{split}
\label{fit_comp_ptm}
\end{equation}
where in the r.h.s. of the second line we replaced $1/f^{(n-1)}_{j}$ with $1/f^{(n)}_{j}$, which does not affect the results
in the limit $n\to\infty$.
Note that in the r.h.s. of the second line, the factor $\epsilon_j$ of the terms $1/f^{(n)}_{j}$ represents the
number of countries that belong to group $j$.
Now we transform Eqs. \eqref{fit_comp_ptm} into a set of equivalent equations
for the fitness ratio $x_{i}^{(n)}:=f_{i}^{(n)}/f_{i+1}^{(n)}$ and the complexity ratio $y_{i}^{(n)}:=q_{i}^{(n)}/q_{i+1}^{(n)}$.
The equation that relates the scores of two consecutive countries $i$ and $i+1$ is
\begin{equation}
 f_{i+1}^{(n+1)}=f_{i}^{(n+1)}+q_{i+1}^{(n)}\times \delta_{i+1}.
 \label{fit_iter_2}
\end{equation}
In terms of the $x$ variables, Eq. \eqref{fit_iter_2} is equivalent to
\begin{equation}
 \frac{1}{x_{i}^{(n)}}=1+\frac{\delta_{i+1}\,q_{i+1}^{(n-1)}}{\tilde{f}_{i+1}^{(n)}},
 \label{xeq}
\end{equation}
which implies
\begin{equation}
 \frac{1/x_{i}^{(n)}-1}{1/x_{i-1}^{(n)}-1}=\frac{\delta_{i+1}}{\delta_{i}}\,\frac{x_{i-1}^{(n)}}{y_{i}^{(n-1)}};
\end{equation}
reshuffling the terms of this equation, we get
\begin{equation}
 x_{i}^{(n)}=\frac{\delta_{i}\,y_{i}^{(n-1)}}{\delta_{i}\,y_{i}^{(n-1)}+\delta_{i+1}\,(1-x_{i-1}^{(n)})}.
 \label{equivalent1}
\end{equation}
For the least-fit country ($i=1$), from Eq. \eqref{xeq} we directly obtain
\begin{equation}
 x_{1}^{(n)}=\frac{\delta_{1}\,y_{1}^{(n-1)}}{\delta_{1}\,y_{1}^{(n-1)}+\delta_{2}}.
 \label{equivalent2}
\end{equation}
Starting from the second line of Eq. \eqref{fit_comp_ptm} and proceeding in a similar way, we obtain the analogous equations for the $y$ variable:
\begin{equation}
 y_{i}^{(n)}=\frac{\epsilon_{i+1}x_{i}^{(n)}}{\epsilon_{i+1}x_{i}^{(n)}+\epsilon_{i}(1-y_{i+1}^{(n)})}
 \label{equivalent3}
\end{equation}
and
\begin{equation}
 y_{m-1}^{(n)}=\frac{\epsilon_{m}x_{m-1}^{(n)}}{\epsilon_{m}x_{m-1}^{(n)}+\epsilon_{m-1}}.
 \label{equivalent4}
\end{equation}
The set formed by Eqs. \eqref{equivalent1},  \eqref{equivalent2},  \eqref{equivalent3}, \eqref{equivalent4} is exactly
equivalent to the original fitness-complexity equations (Eq. \eqref{fit_comp_ptm}).
The uniform initial condition $f_{i}^{(0)}=1\,\,\, \forall i$ implies the initial conditions
\begin{equation}
 x_{i}^{(0)}=1,
\end{equation}
\begin{equation}
 y_{i}^{(0)}=\frac{e_m-e_i}{e_m-e_{i-1}}
\end{equation}
for the variables $x$ and $y$.
Using Eqs. \eqref{equivalent1}-\eqref{equivalent4}, we prove the following lemma:

\begin{lemma}[Convergence]
 The sequences $\{x_{i}^{(n)}\}$ and $\{y_{i}^{(n)}\}$ are convergent in the limit $n\to\infty$.
\end{lemma}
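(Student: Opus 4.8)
The plan is to recast the coupled recursions \eqref{equivalent1}--\eqref{equivalent4} as the iteration of a single order-preserving map and then to invoke monotone convergence. Throughout I would work with the vectors $x^{(n)}=(x_1^{(n)},\dots,x_{m-1}^{(n)})$ and $y^{(n)}=(y_1^{(n)},\dots,y_{m-1}^{(n)})$ and with the componentwise partial order. The point to notice is that, within one iteration, \eqref{equivalent1}--\eqref{equivalent2} determine $x^{(n)}$ from $y^{(n-1)}$ alone (the \emph{$x$-update}, evaluated in the order $i=1,2,\dots$), whereas \eqref{equivalent3}--\eqref{equivalent4} determine $y^{(n)}$ from $x^{(n)}$ alone (the \emph{$y$-update}, evaluated in the order $i=m-1,m-2,\dots$). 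One full step is therefore the composition $x^{(n)}=(\text{$x$-update})\circ(\text{$y$-update})(x^{(n-1)})=:H(x^{(n-1)})$. The prescribed data are consistent with this reading: substituting $x_i^{(0)}=1$ into the $y$-update reproduces exactly $y_i^{(0)}=(e_m-e_i)/(e_m-e_{i-1})$, so iterating $H$ from $x^{(0)}=(1,\dots,1)$ regenerates the sequence defined by \eqref{fit_comp_ptm}.

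First I would establish boundedness: by induction on $i$ within a step, and then on $n$, every iterate satisfies $0<x_i^{(n)}<1$ and $0<y_i^{(n)}<1$ for $n\ge 1$. This is immediate because each of \eqref{equivalent1}--\eqref{equivalent4} has the form $(\text{positive})/(\text{positive}+\text{positive})$, the positivity of the second denominator term being guaranteed by $\delta_{i+1},\epsilon_i>0$ together with the inductive hypotheses $x_{i-1}^{(n)}<1$ and $y_{i+1}^{(n)}<1$. Boundedness is not a mere formality here: it is exactly what keeps the factors $1-x_{i-1}^{(n)}$ and $1-y_{i+1}^{(n)}$ positive, which is what the monotonicity step relies on.

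Next I would show that both updates are order-preserving, the cleanest route being the reciprocal forms
\[
 \frac{1}{x_i^{(n)}}=1+\frac{\delta_{i+1}}{\delta_i}\,\frac{1-x_{i-1}^{(n)}}{y_i^{(n-1)}},\qquad
 \frac{1}{y_i^{(n)}}=1+\frac{\epsilon_i}{\epsilon_{i+1}}\,\frac{1-y_{i+1}^{(n)}}{x_i^{(n)}},
\]
valid with the boundary conventions $x_0^{(n)}=0$ and $y_m^{(n)}=0$ (which recover \eqref{equivalent2} and \eqref{equivalent4}). From these one reads off directly that $x_i^{(n)}$ is increasing in both $y_i^{(n-1)}$ and $x_{i-1}^{(n)}$, and that $y_i^{(n)}$ is increasing in both $x_i^{(n)}$ and $y_{i+1}^{(n)}$, all remaining variables lying in $(0,1)$. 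Propagating these monotonicities along the cascades $i=1,2,\dots$ for the $x$-update and $i=m-1,m-2,\dots$ for the $y$-update shows that each update, and hence the composite map $H$, is monotone in the componentwise order.

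Finally I would run the monotone-convergence argument. Since $x^{(0)}=(1,\dots,1)$ is the top element of $[0,1]^{m-1}$ while every subsequent iterate lies in $(0,1)^{m-1}$, we have $x^{(1)}=H(x^{(0)})\le x^{(0)}$; applying the monotone map $H$ repeatedly yields $x^{(n+1)}\le x^{(n)}$ for all $n$, so $\{x^{(n)}\}$ is non-increasing componentwise and bounded below by $0$, and therefore converges. Because $y^{(n)}$ is the image of $x^{(n)}$ under the monotone $y$-update and $x^{(n+1)}\le x^{(n)}$, the sequence $\{y^{(n)}\}$ is likewise non-increasing and bounded below, hence convergent as well. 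I expect the main obstacle to be the monotonicity bookkeeping: one must track the opposing effects of the numerator $1-x_{i-1}$ and the denominator $y_i$ (and their $y$-analogues) and verify that the cascaded substitutions preserve rather than reverse the order. The reciprocal form above is precisely what renders these sign considerations transparent and what drives the whole argument.
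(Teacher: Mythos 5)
Your proof is correct and follows essentially the same route as the paper's: both establish that the iterates decrease monotonically in $n$ (yours via order-preservation of the composite update map $H$, the paper via an explicit double induction on $n$ and on $i$ exploiting the same coordinate-wise monotonicity of Eqs.~\eqref{equivalent1}--\eqref{equivalent4}), and both conclude by monotone convergence of decreasing sequences bounded below by zero. Your explicit boundedness step $0<x_i^{(n)},y_i^{(n)}<1$ is a welcome piece of care that the paper leaves implicit (it is what guarantees positivity of the factors $1-x_{i-1}^{(n)}$ and $1-y_{i+1}^{(n)}$ appearing in the denominators), but the substance of the argument is identical.
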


\begin{proof}
 To prove the convergence, we first prove that the sequences $\{x_{i}^{(n)}\}$ and $\{y_{i}^{(n)}\}$ are decreasing in $n$.
From Eq. \eqref{equivalent2}, we have
\begin{equation}
 x_{1}^{(1)}=\frac{\delta_{1}\,y_{0}^{(1)}}{\delta_{1}\,y_{0}^{(1)}+\delta_{2}}<1=x_{1}^{(0)};
\label{first_inequality}
 \end{equation}
by combining inequality \eqref{first_inequality} with Eq. \eqref{equivalent1}, we get $x_{2}^{(1)}<x_{2}^{(0)}$;
we can repeat the same for all $i$ and get
\begin{equation}
x_{i}^{(1)}<x_{i}^{(0)} \,\,\forall i.
\label{inequality_x}
\end{equation}
Analogously, by combining inequality \eqref{inequality_x} with Eq. \eqref{equivalent4} we get $y_{m-1}^{(1)}<y_{m-1}^{(0)}$,
from which we can iteratively show that
\begin{equation}
y_{i}^{(1)}<y_{i}^{(0)} \,\,\forall i.
\label{inequality_y}
\end{equation}
Now, we use mathematical induction on $n$ to prove that $x_i^{(n+1)}<x_i^{(n)}$ and $y_i^{(n+1)}<y_i^{(n)}$, for all $i=1,\dots,m-1$. 
Suppose that $x_i^{(n)}<x_i^{(n-1)}$ and $y_i^{(n)}<y_i^{(n-1)}$.
From Eq. \eqref{equivalent2}, the former inequality directly implies
\begin{equation}
 x_{1}^{(n+1)}=\frac{\delta_{1}\,y_{1}^{(n)}}{\delta_{1}\,y_{1}^{(n)}+\delta_{2}}
 < \frac{\delta_{1}\,y_{1}^{(n-1)}}{\delta_{1}\,y_{1}^{(n-1)}+\delta_{2}}=x_{1}^{(n)}.
\end{equation}
To prove the inequality $x_{i}^{(n+1)}<x_{i}^{(n)}$ for all $i$, we use mathematical induction on $i$. To do this,
we show that $x_{i-1}^{(n+1)}<x_{i-1}^{(n)}$ implies $x_{i}^{(n+1)}<x_{i}^{(n)}$.
We obtain
\begin{equation}
 x_{i}^{(n+1)}=\frac{\delta_{i}\,y_{i}^{(n)}}{\delta_{i}\,y_{i}^{(n)}+\delta_{i+1}\,(1-x_{i-1}^{(n+1)})}
 < \frac{\delta_{i}\,y_{i}^{(n-1)}}{\delta_{i}\,y_{i}^{(n-1)}+\delta_{i+1}\,(1-x_{i-1}^{(n+1)})}
 < \frac{\delta_{i}\,y_{i}^{(n-1)}}{\delta_{i}\,y_{i}^{(n-1)}+\delta_{i+1}\,(1-x_{i-1}^{(n)})}=x_{i},
\end{equation}
where we used the induction hypothesis on $n$ in the first inequality, and the induction hypothesis on $i$ in the last inequality.
A similar proof can be carried out to get $y_i^{(n+1)}<y_i^{(n)}$.
Since $x_{i}^{(n)}$ and $y_{i}^{(n)}$ are decreasing sequences in $n$ and $x_{i}^{(n)}>0, y_{i}^{(n)}>0 \,\, \forall n$, 
then $x_{i}^{(n)}$ and $y_{i}^{(n)}$ converge when $n\to\infty$. 
\end{proof}

\subsubsection{Score ratios when the diagonal does \emph{not} cross the empty region of $\mathcal{M}$}

The lemma ensures the convergence of the non-linear map defined by Eq. \eqref{fit_comp_ptm}.
We use now the lemma to prove the theorem that guarantees the convergence of the score ratios to a unique fixed point.
The theorem holds when the diagonal of the matrix does not cross the empty region of the matrix $\mathcal{M}$, i.e.,
the region whose elements are zero.
In formulas, this condition reads
\begin{equation}
 d_i>e_i\,\frac{d_m}{e_m} \,\,\, \forall i=1,\dots,m-1.
 \label{nocrossing1}
\end{equation}
We will also discuss then the procedure to compute the fitness ratios when condition $\eqref{nocrossing1}$ is false.
We emphasize that Ref. \cite{pugliese2014convergence} found this property through an analytic computation on theoretical matrices where
two values $F_1$ and $F_2$ of fitness score are present, conjectured its validity for any nested matrix and verified this 
hypothesis through numerical simulations on several datasets.
Here, we demonstrate its validity for any perfectly nested matrix.

 \begin{figure}[t]
 \centering
 \includegraphics[height=0.33\columnwidth,  angle=0]{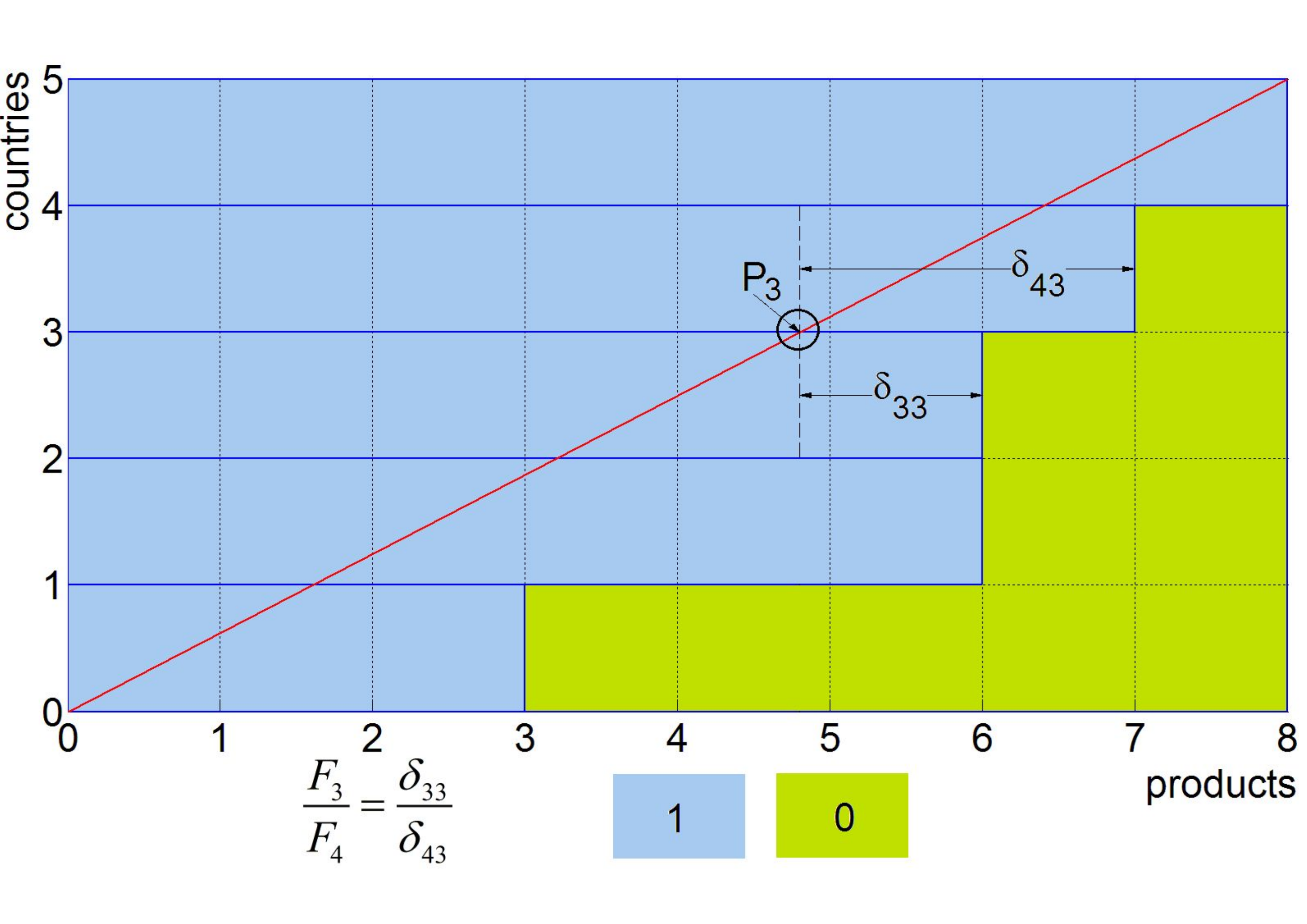}\includegraphics[height=0.33\columnwidth,  angle=0]{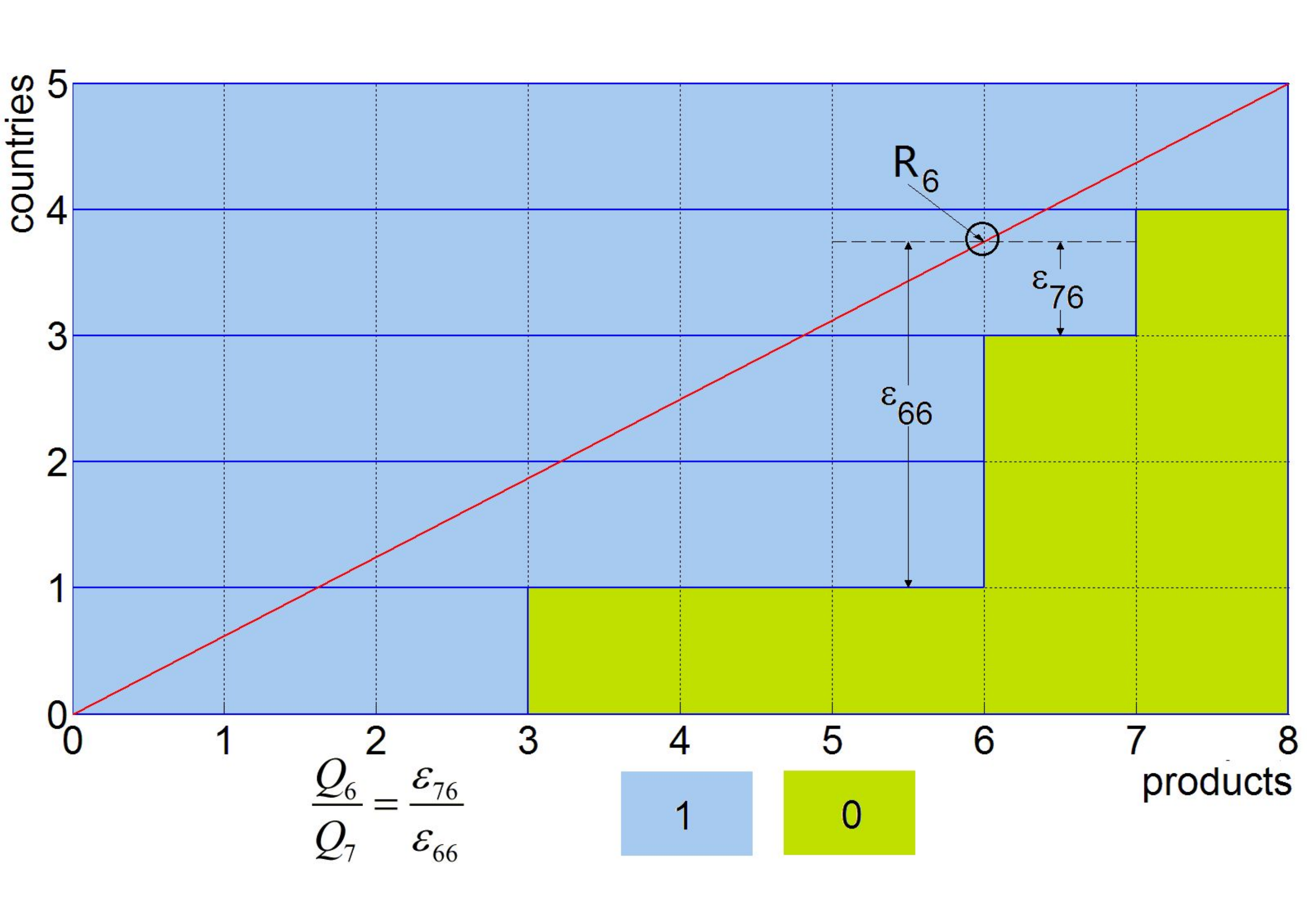}
 \caption{Illustration of the formulas \eqref{sol1}-\eqref{sol2} for computing the score ratios in a $5\times 8$ matrix where
 the diagonal never crosses the empty region of the matrix.
 We denote by $\delta_{ij}:=D_{j}-P_{ix}$ the distance between the point $P_i$ where the diagonal of the matrix intersect the line $y=i$
 and the line $x=D_j$.
 We have then $F_{3}/F_{4}=\delta_{33}/\delta_{43}$.
 Analogously, we denote by $\epsilon_{ij}:=R_{iy}-E_{j}$ the distance between the point $R_i$ where the diagonal of the matrix intersect the line $x=i$ and
 the line $y=E_{j}=N-U_j$ 
 We have then $Q_{6}/Q_{7}=\epsilon_{76}/\epsilon_{66}$. 
 \label{fig:solution1}}
\end{figure}

 \begin{figure}[t]
 \centering
 \includegraphics[height=0.33\columnwidth,  angle=0]{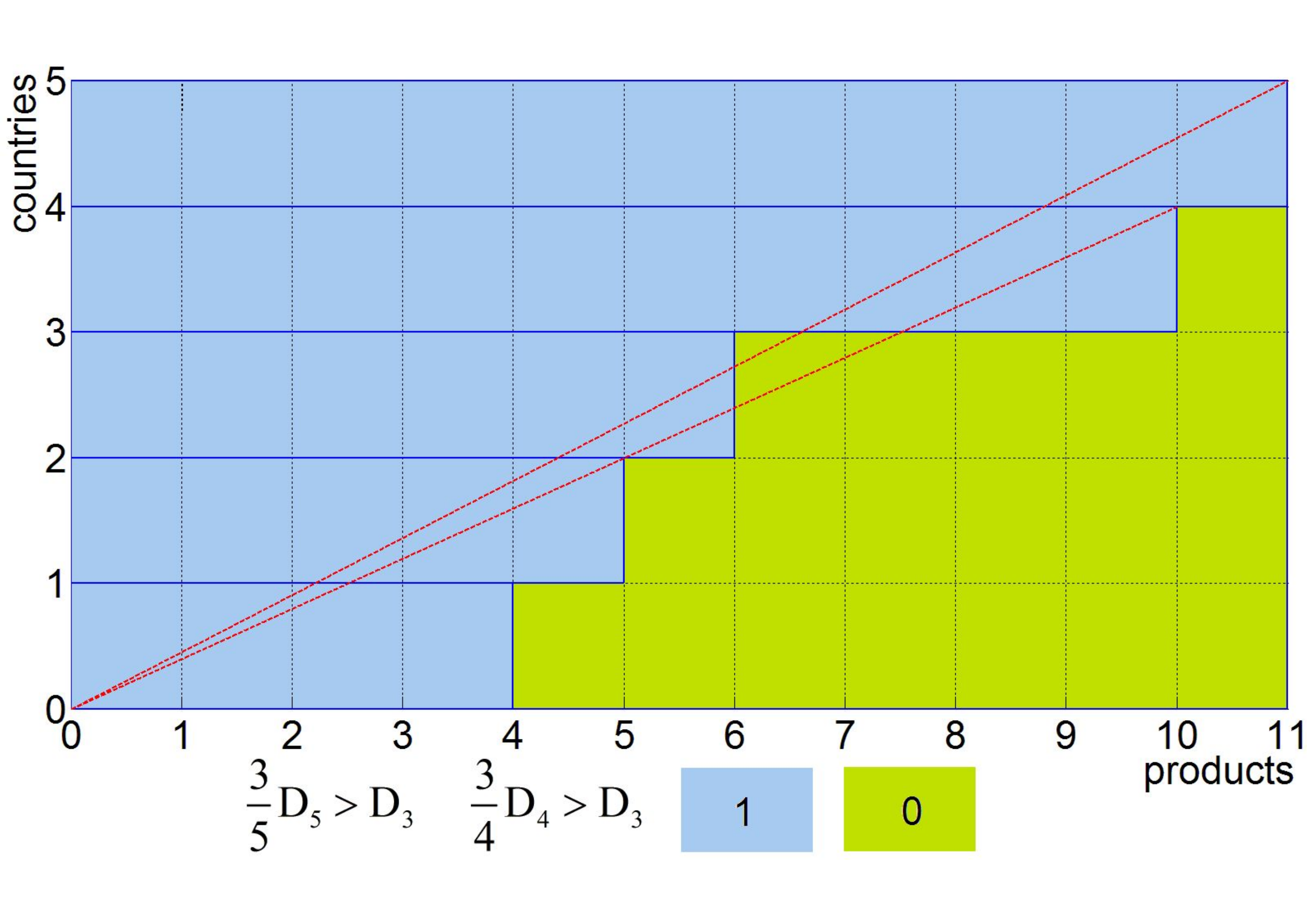}\includegraphics[height=0.33\columnwidth,  angle=0]{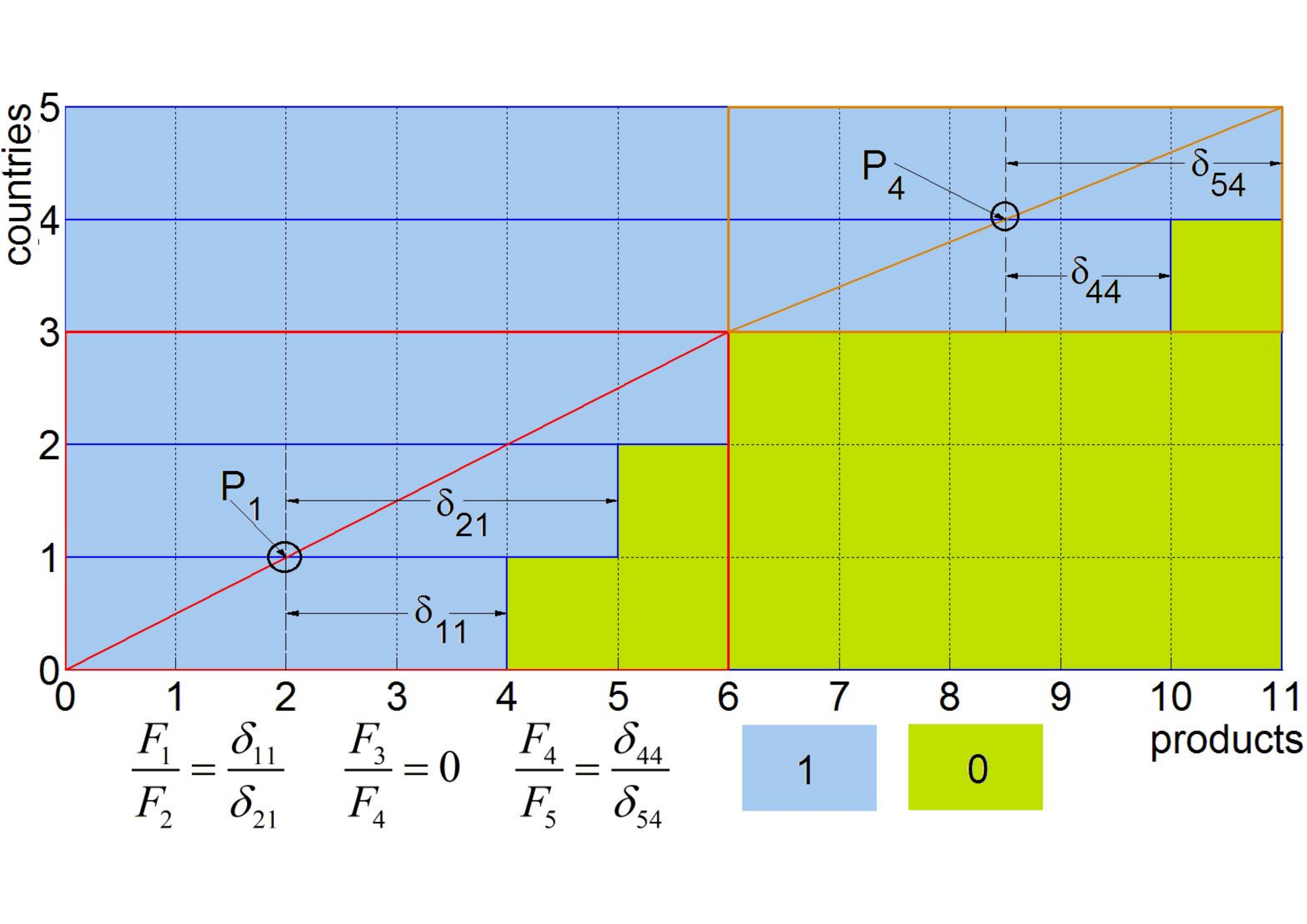}
 \caption{Illustration of the procedure for computing the score ratios in a $5\times 8$ matrix where the diagonal crosses
 the empty region of the matrix.
 In this case, we have to find the most-fit country $j_{max}$ such that condition \eqref{condition} holds. In this example,
 $j_{max} \neq 5$ and $j_{max}\neq 4$, because the diagonal from $(0,0)$ to $(d_j,j)$ crosses the empty region of the matrix for $j=4,5$
 (left panel). We find $j_{max}=3$. We can then compute the score ratios for all the countries $i\leq 3$ and all the products $\alpha\leq d_{3}=6$.
 To do this, we use the same geometrical construction of Fig. \ref{fig:solution1}, but restricted to the submatrix that contains only the three least-fit 
 countries and the $d_{3}$ most-ubiquitous products, which corresponds to the block delimited by red border in the right panel.
 We can then remove from the matrix the countries $i\leq 3$ and the products $\alpha<d_{3}=6$.
 and compute the score ratios for the countries and products in the residual matrix, 
 which corresponds to the block delimited by orange border in the right panel.
 \label{fig:solution2}}
\end{figure}

\begin{theorem} If condition \eqref{nocrossing1} holds,
 then
 \begin{equation}
 \lim_{n\to\infty}  \frac{f_{i}^{(n)}}{f_{i+1}^{(n)}} = a_{i},
  \label{solution_fitness_1}
 \end{equation}
  \begin{equation}
  \lim_{n\to\infty}  \frac{q_{i}^{(n)}}{q_{i+1}^{(n)}} = b_{i},
  \label{solution_complexity_1}
 \end{equation}
 where 
 \begin{equation}
 a_{i}= \frac{d_{i}-\frac{d_{m}}{e_m}e_i}{d_{i+1}-\frac{d_{m}}{e_m}e_i},
 \label{a_def}
 \end{equation}
 \begin{equation}
   b_{i} = \frac{\frac{e_{m}}{d_m}d_i-e_i}{\frac{e_{m}}{d_m}d_i-e_{i-1}}.
\end{equation}
\end{theorem}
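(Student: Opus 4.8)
The plan is to combine the Convergence Lemma with a direct verification that the claimed values solve the stationary equations, and then to promote ``a fixed point'' to ``the limit'' through a uniqueness argument.

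First I would pass to the limit. By the Convergence Lemma the sequences $x_{i}^{(n)}$ and $y_{i}^{(n)}$ converge; writing $x_{i}^{*}:=\lim_{n}x_{i}^{(n)}$ and $y_{i}^{*}:=\lim_{n}y_{i}^{(n)}$ and letting $n\to\infty$ in \eqref{equivalent1}--\eqref{equivalent4} shows that $(x^{*},y^{*})$ is a fixed point of the stationary system obtained by dropping the iteration index. Since $x_{i}^{*}=\lim_{n}f_{i}^{(n)}/f_{i+1}^{(n)}$ and $y_{i}^{*}=\lim_{n}q_{i}^{(n)}/q_{i+1}^{(n)}$, the theorem is equivalent to the two statements $x_{i}^{*}=a_{i}$ and $y_{i}^{*}=b_{i}$.

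Second, I would introduce the slope $s:=d_{m}/e_{m}$ of the matrix diagonal and the horizontal gaps $u_{i}:=d_{i}-s\,e_{i}$, so that condition \eqref{nocrossing1} is exactly $u_{i}>0$ for $1\le i\le m-1$, while $u_{0}=u_{m}=0$. The point of this substitution is the single algebraic identity
\begin{equation}
w_{i}:=d_{i+1}-s\,e_{i}=u_{i}+\delta_{i+1}=u_{i+1}+s\,\epsilon_{i+1},
\end{equation}
which rewrites the candidate compactly as $a_{i}=u_{i}/w_{i}$ and $b_{i}=u_{i}/w_{i-1}$, and which yields $1-a_{i-1}=\delta_{i}/w_{i-1}$ and $1-b_{i+1}=s\,\epsilon_{i+1}/w_{i}$. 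Substituting these into the stationary forms of \eqref{equivalent1} and \eqref{equivalent3}, the common factor $w_{i-1}$ (respectively $w_{i}$) cancels between numerator and denominator and each equation collapses to the defining ratio of $a_{i}$ (respectively $b_{i}$); the two boundary relations \eqref{equivalent2} and \eqref{equivalent4} follow from the same identity together with $e_{0}=0$ and $s\,e_{m}=d_{m}$. I expect this verification to be short and purely mechanical once the identity above is in place.

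Third --- and this is where the real work lies --- I would establish that the stationary system has a \emph{unique} solution in $(0,1)^{2(m-1)}$, for only then does ``$(a,b)$ is a fixed point'' combined with the Convergence Lemma force $x^{*}=a$ and $y^{*}=b$. The obstacle is that the Convergence Lemma by itself yields convergence to \emph{some} fixed point, not necessarily the one just exhibited. The cleanest route I see is to observe that the underlying map $f\mapsto T(f)$ defined by \eqref{fit_comp_ptm} is order preserving (increasing any $f_{k}$ increases every $q_{j}$ and hence every $f_{i}$) and homogeneous of degree one ($T(\lambda f)=\lambda\,T(f)$), and that condition \eqref{nocrossing1} guarantees the candidate lies in the open cube $(0,1)^{2(m-1)}$ (all $u_{i}>0$), i.e.\ that the relevant eigenray is interior to the positive cone. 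For such maps the nonlinear Perron--Frobenius theory (contraction in the Hilbert projective metric) gives a unique positive eigenray, hence a unique limiting set of scale-invariant ratios; since $x_{i}$ and $y_{i}$ are determined by that ray, the limit is forced to coincide with the fixed point $(a,b)$. A more elementary alternative is to regard the forward sweep $x=X(y)$ and backward sweep $y=Y(x)$ defined by the stationary equations as coordinatewise-increasing maps and to show that the composite $X\circ Y$ is a strict contraction (or strictly sub-homogeneous) on $(0,1)^{m-1}$; the saturating M\"obius form of each update makes this plausible but requires a careful estimate.

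Finally, having pinned $x_{i}^{*}=a_{i}$ and $y_{i}^{*}=b_{i}$, I would read back \eqref{solution_fitness_1}--\eqref{solution_complexity_1} and note that $u_{i}=d_{i}-s\,e_{i}$ is precisely the horizontal distance between the matrix diagonal and the step boundary at the level of group $i$, so that $a_{i}=u_{i}/w_{i}$ and $b_{i}=u_{i}/w_{i-1}$ are ratios of such distances, matching the geometric construction of Fig.~\ref{fig:solution1}.
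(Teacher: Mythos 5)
Your first two steps are sound and coincide with the paper's own strategy: pass to the limit using the Convergence Lemma, then verify by substitution that $(\mathbf{a},\mathbf{b})$ solves the stationary system \eqref{stationary1}--\eqref{stationary4}; your identity $w_{i}=u_{i}+\delta_{i+1}=u_{i+1}+s\,\epsilon_{i+1}$ is a correct and pleasant compression of that check. The genuine gap is in your third step, which you correctly identify as ``where the real work lies'' but do not actually carry out. Order preservation plus homogeneity of degree one gives only \emph{non-expansiveness} in the Hilbert projective metric, not strict contraction, and the existence of an interior candidate eigenray does not force uniqueness: the identity map is order preserving, homogeneous of degree one, and has interior eigenvectors, yet every ray is an eigenray. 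More damningly, the map defined by Eq.~\eqref{fit_comp_ptm} has exactly the same structural properties whether or not condition \eqref{nocrossing1} holds, and when that condition fails the score ratios provably converge to zero (projective convergence to the boundary of the cone); hence no argument that uses only ``order preserving $+$ homogeneous $+$ interior candidate'' can deliver the contraction you invoke --- condition \eqref{nocrossing1} must enter the uniqueness estimate quantitatively. A Birkhoff-type contraction is also unavailable here because the image of the interior under the map has infinite projective diameter (e.g.\ $f_{1}/f_{2}$ can be driven arbitrarily close to zero), so there is no uniform contraction factor. Finally, note that the stationary system genuinely admits other solutions (for instance $x_{i}=y_{i}=0$ for all $i$), so uniqueness can only hold on a restricted set; since your sequences $x_{i}^{(n)},y_{i}^{(n)}$ are \emph{decreasing}, you must rule out that their limits land on such a boundary solution, and your proposal never does this --- you only check that the candidate $(\mathbf{a},\mathbf{b})$ is interior, not that the limit is. Your ``elementary alternative'' (strict contraction of $X\circ Y$) is flagged by you as requiring ``a careful estimate'' and is likewise not supplied.

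For comparison, the paper closes exactly these two holes with elementary but essential work. First, it proves by induction that under \eqref{nocrossing1} one has $x_{i}^{(n)}>a_{i}$ and $y_{i}^{(n)}>b_{i}$ for all $n$, so the limit lies in the region $a_{i}\leq x_{i}<1$, $b_{i}\leq y_{i}<(e_{m}-e_{i})/(e_{m}-e_{i-1})$, safely away from spurious boundary solutions. Second, it proves uniqueness on that region: one preparatory property shows that if a single component of a solution exceeds its candidate value then all components do; another shows that $y_{m-1}$ determines the entire solution; then a reductio ad absurdum embeds the $m$-group system into a transformed $(m+1)$-group system with free parameters $\delta'_{1},\epsilon'_{1}$, and shows that any hypothetical solution with $\tilde{b}_{m-1}>b_{m-1}$ forces $\delta'_{1}>d_{m}/e_{m}$ and hence a nonzero first component $\tilde{a}'_{1}$ of the transformed solution, which contradicts the requirement $\tilde{a}'_{1}=0$. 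Until you either prove an honest Hilbert-metric contraction estimate in which condition \eqref{nocrossing1} plays an essential role, or supply an argument of the paper's type (a lower bound confining the limit plus uniqueness on the confined region), your outline does not constitute a proof of the theorem.
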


We refer to Appendix B for the details of the proof.
The components of the limit vectors $(\mathbf{a},\mathbf{b})$ have a simple geometrical interpretation.
To see this, we rewrite Eqs.\eqref{solution_fitness_1}-\eqref{solution_complexity_1} in terms of the original variables
$F,Q,D,U$:
\begin{equation}
\lim_{n\to+\infty}\frac{F^{(n)}_i}{F^{(n)}_{i+1}}=\frac{D_i-\frac{M}{N}i}{D_{i+1}-\frac{M}{N}i},
\label{sol1_ori}
\end{equation}
\begin{equation}
\lim_{n\to+\infty}\frac{Q^{(n)}_i}{Q^{(n)}_{i+1}}=\frac{\frac{N}{M}i-E_{i+1}}{\frac{N}{M}i-E_{i}},
\label{sol2_ori}
\end{equation}
where we defined $E_{i}=N-U_{i}$.
In term of the original variables, condition \eqref{nocrossing1} reads 
\begin{equation}
 D_i>\frac{M}{N}i
 \label{condition_nocrossing}
\end{equation} 
The solution \eqref{sol1_ori}-\eqref{sol2_ori} has
a simple geometric interpretation when considering the representation of the matrix $\mathcal{M}$ in the euclidean plane.

If we denote by $P_{ix}$ the $x$-coordinate of the point $P_{i}$ where the diagonal
of the matrix -- i.e., the diagonal from $(0,0)$ to $(M,N)$ --
 intersects the horizontal line $y=i$, we have exactly $P_{ix}=i\,M/N$ (see Fig. \ref{fig:solution1}).
 As a consequence, assuming $D_i>i\,M/N$ is equivalent to assuming that
 the diagonal of the matrix never crosses the empty region of the matrix.
 Eq. \eqref{sol1} can thus be rewritten as
 \begin{equation}
  \frac{F_{i}}{F_{i+1}}=\frac{D_{i}-P_{ix}}{D_{i+1}-P_{ix}}.
  \label{sol1}
 \end{equation}
 As shown in Fig. \ref{fig:solution1}, the numerator and the denominator can be interpreted as
 the distances of the point $P_i$ from the vertical lines $x=D_i$ and $x=D_{i+1}$, respectively.
One can also show that condition 
\eqref{condition_nocrossing} implies $M\,E_{i+1}<i\,N$ ($i=1,\dots,M-1$), If we denote by $R_{iy}$ the $y$-coordinate of the point $R_{i}$ 
where the diagonal from $(0,0)$ to $(M,N)$ intersects the line $x=i$, we have $R_{iy}=i\,N/M$. Eq. \eqref{sol2}
can be rewritten as:
 \begin{equation}
  \frac{Q_{i}}{Q_{i+1}}=\frac{R_{iy}-E_{i+1}}{R_{iy}-E_{i}},
  \label{sol2}
 \end{equation}
 which has a simple geometrical interpretation as well (see Fig. \ref{fig:solution1}).

 \subsubsection{Score ratios when the diagonal does cross the empty region of $\mathcal{M}$}
 \label{analytics_crossing}
 
If the diagonal of the matrix crosses the empty region of the matrix -- i.e., if there exists some $i$ such that $d_i\leq e_i\,d_m/e_m$ --
we cannot directly use Eqs. \eqref{solution_fitness_1}, \eqref{solution_complexity_1}.
In this case, the procedure to compute the fitness and complexity ratios is the following:
\begin{enumerate}
 \item We find the most-fit country $j_{max}$ such that
\begin{equation}
D_{i}-\frac{i}{j_{max}}\,D_{j_{max}}>0 \, \forall \, i\leq j_{max}.
 \label{condition}
\end{equation}
 When considering the matrix $\mathcal{M}$ in the euclidean plane, the country $j_{max}$ corresponds to
 the most-fit country such that the diagonal from $(0,0)$ to $(d_{j_{max}},j_{max})$ never crosses the empty region of the reduced matrix
 that contains only the countries $j<j_{max}$,
 as shown in Fig. \ref{fig:solution2}.
 \item Once the value of $j_{max}$ has been determined, we can compute all the fitness ratios for the countries $i<j_{max}$ as
 \begin{equation}
  \frac{F_{i}}{F_{i+1}}=\frac{D_{i}-iD_{j_{max}}/j_{max}}{D_{i+1}-iD_{j_{max}}/j_{max}},
  \label{solution_fitness}
 \end{equation}
  \begin{equation}
  \frac{Q_{i}}{Q_{i+1}}= \frac{ij_{max}/D_{j_{max}}-E_{i+1}}{ij_{max}/D_{j_{max}}-E_{i}}.
  \label{solution_complexity}
 \end{equation}
 for all $i<j_{max}$, and $F_{j_{max}}/F_{j_{max}+1}=0$, $Q_{j_{max}}/Q_{j_{max}+1}=0$.
 Note that this formula has the same geometrical interpretation of Eqs. \eqref{solution_fitness_1}-\eqref{solution_complexity_1},
 but the geometrical construction is carried out in a submatrix of the matrix $\mathcal{M}$ (see Fig. \ref{fig:solution2}).
 \item We remove from the matrix all the countries $i\leq j_{max}$ and all the products $\alpha\leq d_{j_{max}}$, 
 and restart from point 1, until
 all the ratios are computed.
\end{enumerate}

The interpretation of this procedure is simple: if the diagonal line crosses the empty region of the matrix,
there is at least one pair of countries for which the score ratio converges to zero.
In this case, the matrix should be split in blocks such that the score ratios are all nonzero within each block;
the score ratios can then be computed inside each block according to Eqs. \eqref{solution_fitness}-\eqref{solution_complexity}.
A graphical illustration of this procedure is shown in Fig. \ref{fig:solution2}.

\section{Results in real networks}
\label{realdata}

 \begin{figure}[t]
 \centering
 \includegraphics[height=0.45\columnwidth,  angle=270]{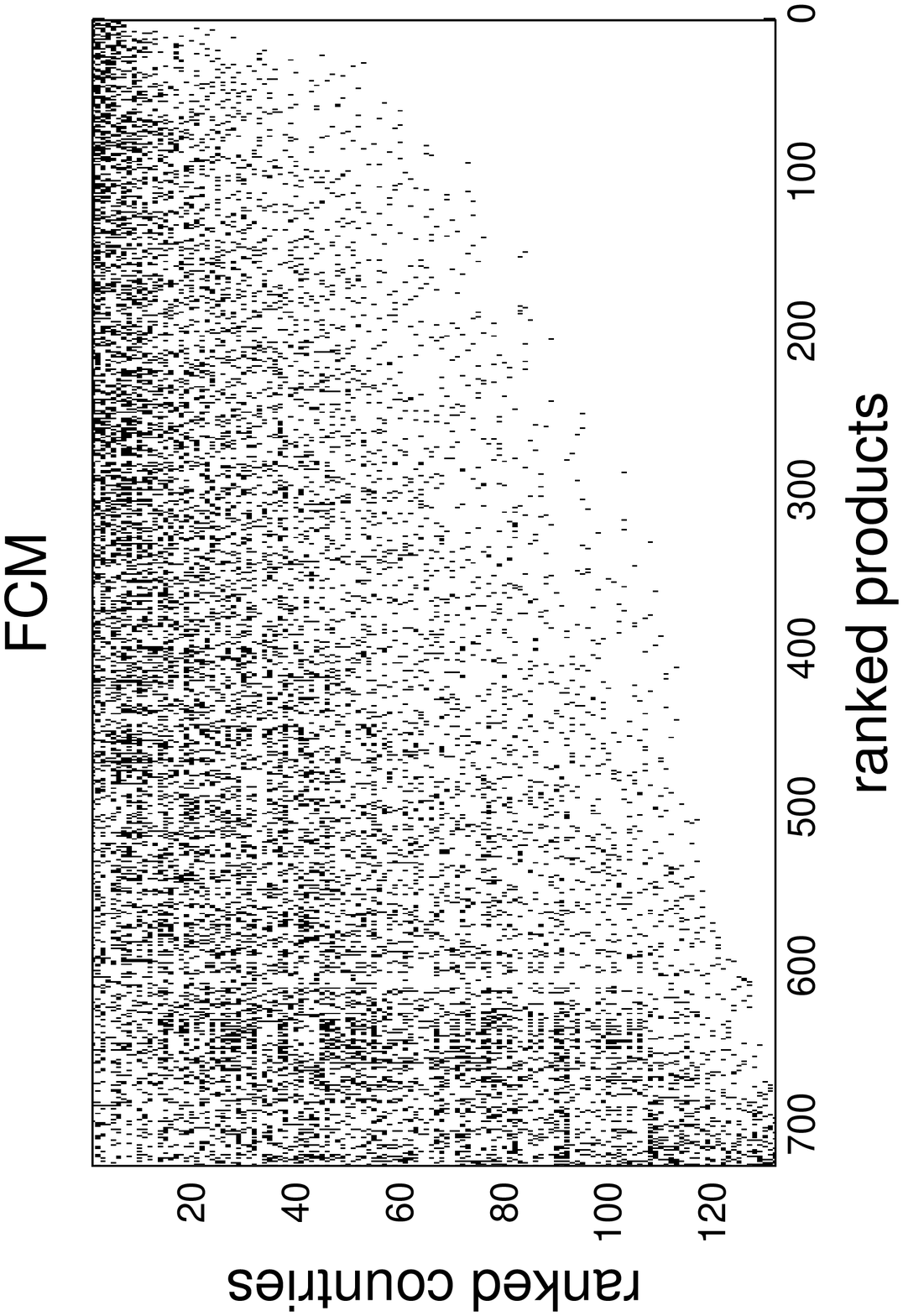}\includegraphics[height=0.45\columnwidth,  angle=270]{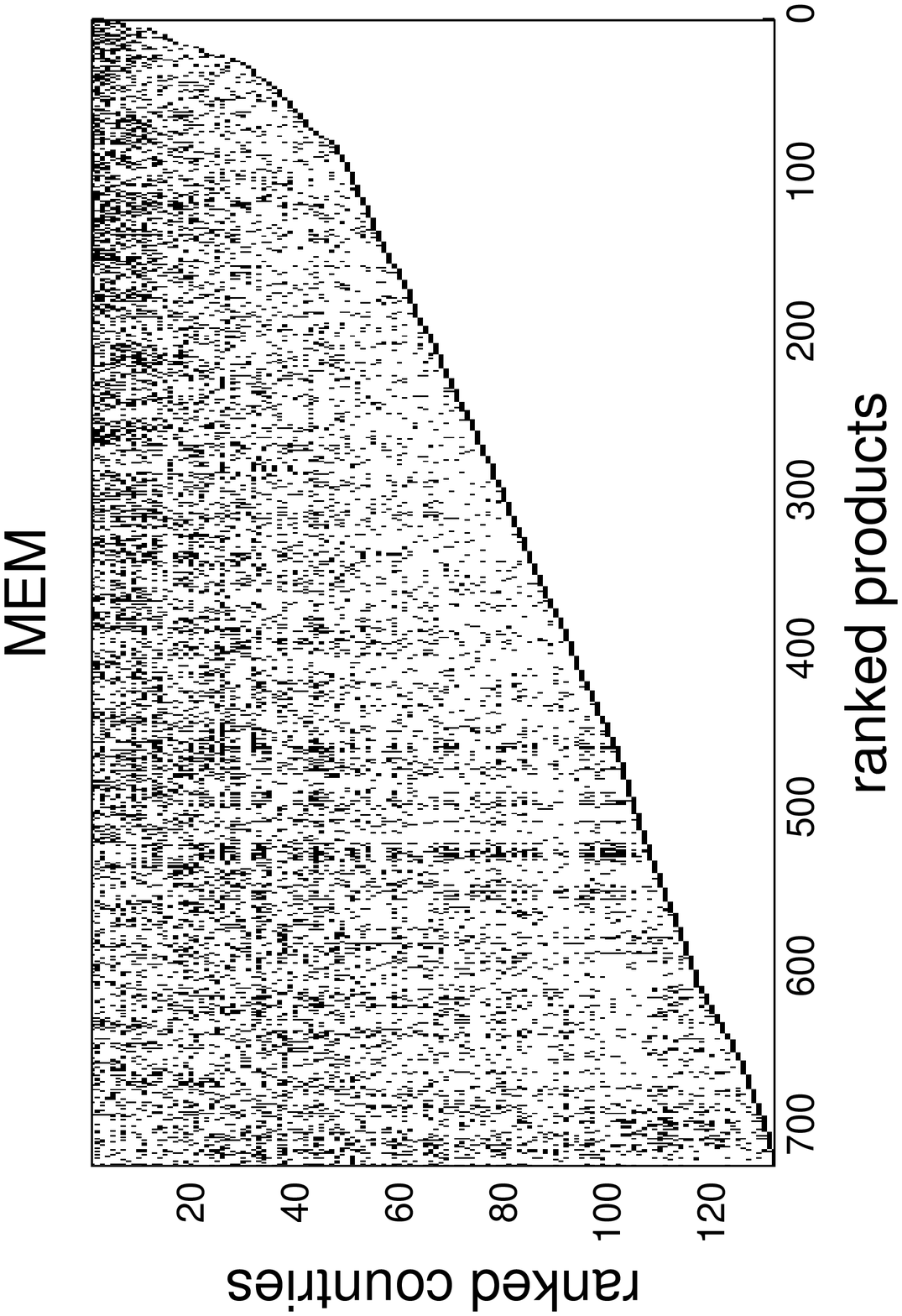}
 \caption{Country-product matrices resulting from the FCM and the MEM (1996).
Both matrices are nested, but the border
between the filled and the empty region of the matrix is sharper for the MEM than for the FCM. 
 \label{fig:matrix}}
\end{figure}

 \begin{figure}[t]
 \centering
 \includegraphics[height=0.45\columnwidth,  angle=270]{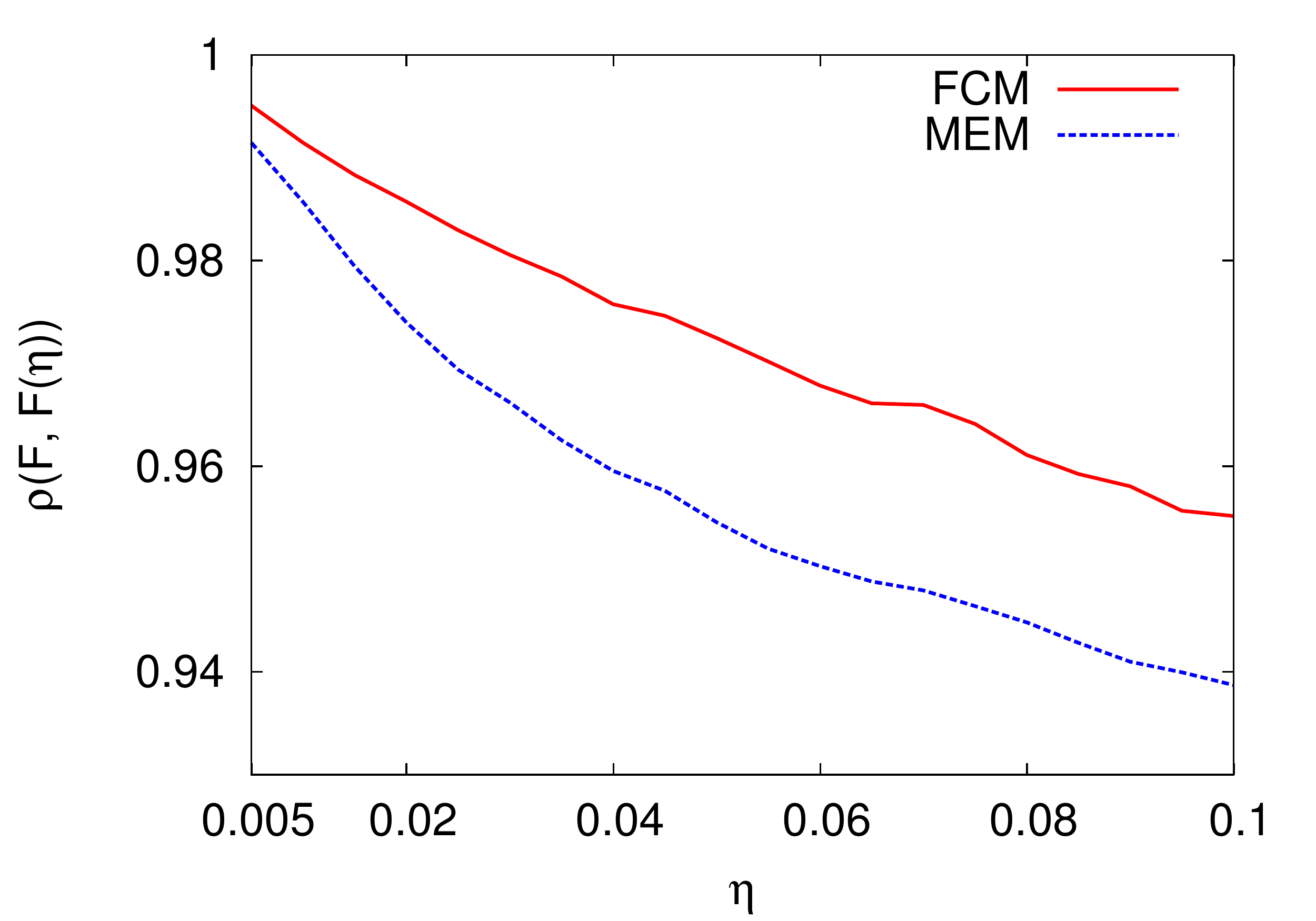}\includegraphics[height=0.45\columnwidth,  angle=270]{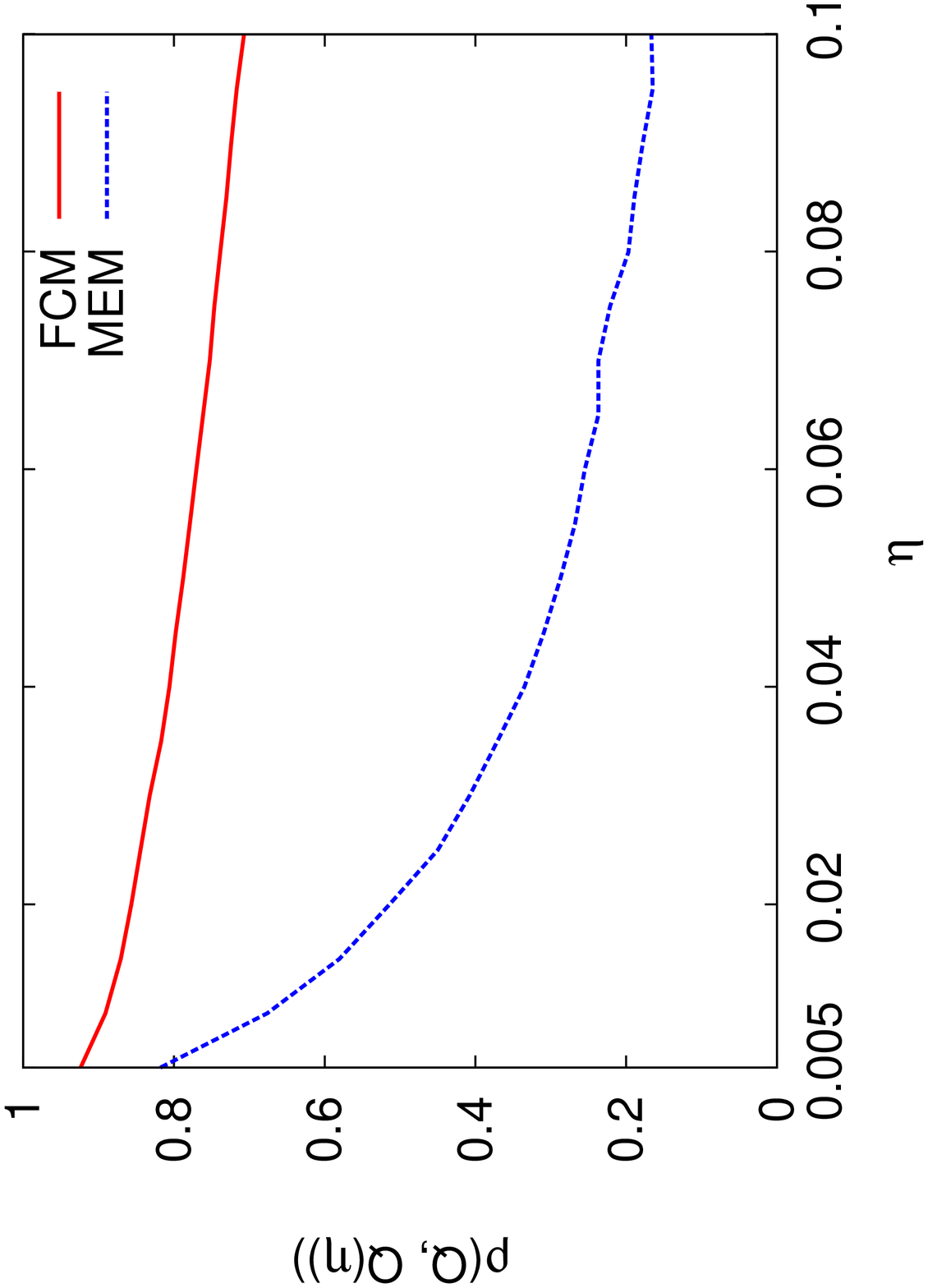}
 \caption{Robustness against noise of the rankings as a function of the fraction $\eta$ of reverted bits in the matrix $M$ (year 1996).
 Robustness is
measured by the Spearman's correlation between the rankings before and after the inversion.
 \label{fig:stability}}
\end{figure}

\subsection{Revealing the nested structure of country-product matrices}

The MEM has been introduced in section \ref{algorithms} as a minimal metric based on the same assumptions of the fitness-complexity metric.
In this section, we explore its behavior on real data and compare its rankings with those produced by the FCM.
In real data, the fitness-complexity metric has been used to
reveal the nested structure of a given network.
This is achieved by ordering the rows and the columns of the matrix $\mathcal{M}$ according to their ranking by the metric
\cite{tacchella2012new,dominguez2015ranking}.
In particular, the fitness-complexity metric outperforms other existing network centrality metrics and standard nestedness calculators
in packing nested matrices \cite{dominguez2015ranking}.
Here, we use the NBER-UN international trade data to compare the matrices produced by the FCM and the MEM;
we refer to Appendix C for a detailed description of the dataset. 
We show here results for the year $1996$; 
results for the different years are
in qualitative agreement.
We first observe that the rankings of countries by the two metrics are highly correlated ($\rho=0.994$),
and both country scores are highly correlated with country diversification [$\rho=0.963$ for the FCM, $\rho=0.955$ for the MEM].
With respect to the matrix produced by the FCM, the matrix produced by the MEM 
exhibits a sharper border between the empty and the filled parts of the matrix (see Fig. \ref{fig:matrix}).
This result suggests that the MEM could be used to produce optimally packed matrices for networks that exhibit a nested structure \cite{dominguez2015ranking}.
In agreement with the convergence criterion introduced in Section \ref{mem_results},
to obtain the results shown in the Fig. \ref{fig:matrix}, we performed $107$ and $6700$ iterations for the FCM and the MEM, respectively.
We refer to \ref{convergence} and \ref{complexity} for more details on the convergence properties of the two metrics in real and artificial data.


\subsection{Sensitivity with respect to noisy input data}

An important issue for any data-driven variable is its stability with respect to
perturbations in the system \cite{lu2011leaders, battiston2014metrics, liu2016stability}.
Following Refs. \cite{battiston2014metrics, mariani2015measuring}, 
in order to study the robustness of the rankings with respect to noise,
we randomly revert a fraction $\eta$ of bits in the binary matrix $\mathcal{M}$ and compute 
the Spearman's correlations of the scores computed before and after the reversal.
Fig. \ref{fig:stability} shows that the rankings by the FCM are more stable than the rankings by the MEM;
the gap between the two methods is particularly large for the ranking of products.
On the other hand, the different robustness of the two methods is mostly due to the region of the matrix $\mathcal{M}$
whose elements correspond to the most complex products and the least-fit countries.
This can be proved by perturbing only the submatrix $\mathcal{M}^{(bottom-right)}$ 
of $\mathcal{M}$ that contains the $M/2$ most complex products and the $N/2$ least-fit countries according to the ranking by the MEM,
and compare the outcome with that obtained when perturbing only the submatrix $\mathcal{M}^{(top-left)}$ that 
contains the $M/2$ least-complex products and the $N/2$ most fit countries.
The difference is striking: for $\mathcal{M}^{(bottom-right)}$, we find $\rho(Q,Q(0.1))=0.420$ and $\rho(Q,Q(0.1))=-0.142$ for the
FCM and the MEM, respectively;
for $\mathcal{M}^{(top-left)}$, we find $\rho(Q,Q(0.1))=0.994$ and $\rho(Q,Q(0.1))=0.999$ for the FCM and the MEM, respectively.
These findings indicate that the ranking of products by the MEM is not reliable
when the data are subject to mistakes and noise, as is the case for world trade data \cite{battiston2014metrics},
and that the major contribution to the ranking instability comes from the exports of the least-fit countries.

\section{Conclusions}

Understanding the mathematics behind a network-based ranking algorithm is crucial for its real-world applications.
This article moves the first step toward a rigorous understanding of the mathematical properties of the fitness-complexity metric
for nested networks.
We exactly computed country and product scores for perfectly nested matrices.
Our analytic findings 
are in agreement with the analytic and numeric findings of Ref. \cite{pugliese2014convergence} on the relation
between the convergence properties of the metric and the shape of the underlying nested matrix.
We stress again that while we have used the terminology of economic complexity throughout this work,
our findings hold for any network that exhibit a nested architecture.
For the application of the metric to mutualistic networks, only the meaning of the variables change:
$F$ and $Q$ represent active species importance and passive species vulnerability, respectively \cite{dominguez2015ranking}.

In this work, we have also introduced and studied the MEM, a novel variant of the FCM that is simpler to be analytically treated.
Our findings on real data indicate that the MEM can order rows and columns of nested matrices even better than the FCM.
The high correlation between the country scores obtained with the FCM and the MEM suggests that 
the MEM and the FCM are similarly informative about the competitiveness of a country in international trade
and its future growth potential \cite{cristelli2015heterogeneous}.
On the other hand, the rankings of products by the MEM turn out to be much less stable under a random perturbation in the country-product binary matrix.
To conclude, while the MEM can produce more packed nested matrices with respect to those produced by the FCM, its
ranking of products is reliable only for high-quality data.

\appendix

\section{Proof of Eq. \eqref{solution_minimal}}
\label{mem_proof}

We assume that Eq. \eqref{solution_minimal} holds for $i=k$, and show that this assumption implies that it holds also for $i=k+1$.
In formulas, we assume that in the limit $n\to\infty$
\begin{equation}
 \frac{F_{k}^{(n)}}{F_{k+1}^{(n)}}=1-\frac{\Delta_{k+1}}{H_{k+1}},
 \label{hyp}
\end{equation}
where $H_{k+1}=\max_{j\in[1,k+1]}{\{\Delta_{j}\}}$, and we want to prove that Eq. \eqref{hyp} implies
\begin{equation}
 \frac{F_{k+1}^{(n)}}{F_{k+2}^{(n)}}=1-\frac{\Delta_{k+2}}{H_{k+2}}.
 \label{thesis}
\end{equation}
Using Eq. \eqref{fit_iter}, we obtain
\begin{equation}
 \frac{F^{(n+1)}_{k+2}}{F^{(n+1)}_{k+1}}=\frac{F^{(n+1)}_{k+1}+F^{(n)}_{k+2}\,\Delta_{k+2}}{F_{k+1}^{(n+1)}}=1+\frac{F^{(n)}_{k+2}\,\Delta_{k+2}}{F_{k+1}^{(n+1)}}.
\label{proof1}
 \end{equation}
We want to express the denominator $F_{k+1}^{(n+1)}$ in terms of $F_{k+1}^{(n)}$ in order to transform this equation
into a recurrence relation for $\frac{F^{(n)}_{k+2}}{F^{(n)}_{k+1}}$.
To do this, we use Eq. \eqref{fit_iter} and obtain
\begin{equation}
 F_{k+1}^{(n+1)}=  F_{k}^{(n+1)} + F_{k+1}^{(n)}\,\Delta_{k+1}=F_{k+1}^{(n)}\,
 \Delta_{k+1}\,\Biggl(1+\frac{F_{k}^{(n+1)}}{\Delta_{k+1}\,F_{k+1}^{(n)}} \Biggr)
 =F_{k+1}^{(n)}\, \Delta_{k+1}\,\Biggl(1+\frac{F_{k}^{(n+1)}}{F_{k+1}^{(n+1)}-F_{k}^{(n+1)}} \Biggr).
\end{equation}
We now use the hypothesis \eqref{hyp}:
\begin{equation}
\begin{split}
 F_{k+1}^{(n+1)}&=F_{k+1}^{(n)}\, \Delta_{k+1}\,\Biggl(1+\frac{F_{k}^{(n+1)}/F_{k+1}^{(n+1)}}{1-F_{k}^{(n+1)}/F_{k+1}^{(n+1)}}\Biggr)
 =F_{k+1}^{(n)}\, \Delta_{k+1}\,\Biggl(1+\frac{1-\Delta_{k+1}/H_{k+1}}{\Delta_{k+1}/H_{k+1}}\Biggr)\\
 &=F_{k+1}^{(n)}\, \Delta_{k+1}\,\Biggl(1+ \frac{H_{k+1}-\Delta_{k+1}}{\Delta_{k+1}}\Biggr)=F_{k+1}^{(n)}\,H_{k+1}.
 \end{split}
 \label{intermediate}
\end{equation}
Plugging Eq. \eqref{intermediate} into Eq. \eqref{proof1} we get
\begin{equation}
 \frac{F^{(n+1)}_{k+2}}{F^{(n+1)}_{k+1}}=1+\frac{\Delta_{k+2}}{H_{k+1}}\, \frac{F^{(n)}_{k+2}}{F_{k+1}^{(n)}},
\label{proof2}
 \end{equation}
Eq. \eqref{proof2} is a recurrence equation for $\frac{F^{(n+1)}_{k+2}}{F^{(n+1)}_{k+1}}$.
We distinguish two cases:
\begin{itemize}
 \item If $\frac{\Delta_{k+2}}{H_{k+1}}\geq 1$, then $\lim_{n\to\infty}F^{(n)}_{k+2}/F^{(n)}_{k+1}=\infty$ and $\lim_{n\to\infty}F^{(n)}_{k+1}/F^{(n)}_{k+2}=0$.
 In this case $H_{k+2}=\Delta_{k+2}$ by definition and Eq. \eqref{thesis} (i.e., the thesis) is satisfied.
 \item If $\frac{\Delta_{k+2}}{H_{k+1}}<1$, then we can find the stationary point $\overline{x}$ of equation \eqref{proof2} by
 posing $\overline{x}=\frac{F^{(n+1)}_{k+2}}{F^{(n+1)}_{k+1}}=\frac{F^{(n)}_{k+2}}{F^{(n)}_{k+1}}$. We notice that
 in this case $H_{k+2}=H_{k+1}$ and, as a result, we obtain Eq. \eqref{thesis}.
\end{itemize}
This proves the thesis.

\section{Proof of Theorem 2}

We denote by $(\mathbf{x},\mathbf{y})$ the pair of vectors that solve the Eqs. \eqref{equivalent1}-\eqref{equivalent4} in the limit $n\to\infty$,
which read
\begin{equation}
 x_{i}=\frac{\delta_{i}\,y_{i}}{\delta_{i}\,y_{i}+\delta_{i+1}\,(1-x_{i-1})}, \,\,\, (i=2,\dots,m-1)
 \label{stationary1}
\end{equation}
\begin{equation}
 x_{1}=\frac{\delta_{1}\,y_{1}}{\delta_{1}\,y_{1}+\delta_{2}},
 \label{stationary2}
\end{equation}
\begin{equation}
 y_{i}=\frac{\epsilon_{i+1}x_{i}}{\epsilon_{i+1}x_{i}+\epsilon_{i}(1-y_{i+1})}, \,\,\, (i=1,\dots,m-2)
 \label{stationary3}
\end{equation}
\begin{equation}
 y_{m-1}=\frac{\epsilon_{m}x_{m-1}}{\epsilon_{m}x_{m-1}+\epsilon_{m-1}}.
 \label{stationary4}
\end{equation}
To prove that $(\mathbf{x},\mathbf{y})=(\mathbf{a},\mathbf{b})$ is a solution of the Eqs. \eqref{stationary1}-\eqref{stationary4}, it is sufficient to check that
an identity is obtained when
replacing $x_{i}$ and $y_{i}$ with $a_{i}$ and $b_{i}$ in the Eqs. \eqref{stationary1}-\eqref{stationary4}.
If $e_m\,d_i>e_i\,d_m$, we can easily use mathematical induction 
to prove that $x_i^{(n)}>a_i$ and $y_i^{(n)}>b_i$ for all $i=1,\dots,m-1$ and for all $n\geq 0$. The proof is similar to the proof of Lemma 1.
We are then interested only in solutions of the Eqs. \eqref{stationary1}-\eqref{stationary4} that satisfies
\begin{equation}
a_i\leq x_i<1
\label{inequality1}
\end{equation}
and
\begin{equation}
b_i\leq y_i< \frac{e_m-e_i}{e_m-e_{i-1}};
\label{inequality2}
\end{equation}
solutions that do not satisfy conditions \eqref{inequality1} and \eqref{inequality2} cannot be reached through the iterative
process defined by Eqs. \eqref{equivalent1}-\eqref{equivalent4}, and will not be considered in the following.
In the following, always imply conditions \eqref{inequality1} and \eqref{inequality2} for the studied solutions.
To show that the solution of Eqs. \eqref{stationary1}-\eqref{stationary4} is 
unique, we use a reductio ad absurdum:
we assume that a different solution $\mathbf{x}=\mathbf{\tilde{a}}$ and $\mathbf{y}=\mathbf{\tilde{b}}$ 
exists, and show that this assumption leads to an absurd result.
Before doing this, we state two useful properties of the solutions of Eqs. \eqref{stationary1}-\eqref{stationary4}.

\begin{property}
For a solution $(\mathbf{x},\mathbf{y})$ of Eqs. \eqref{stationary1}-\eqref{stationary4}, if there exist an integer $j$ such that $x_{j}>a_{j}$ or $y_{j}>b_{j}$,
then $x_{i}>a_{i}$ and $y_{i}>b_{i}$ for all $i=1,\dots,m$.
\label{property1}
\end{property}

\begin{proof}
 Suppose that $x_{j}>a_{j}$ for a certain component $j$.
 Then 
 \begin{equation}
  y_{j}=\frac{\epsilon_{j+1}\,x_{j}}{\epsilon_{j+1}\,x_{j}+\epsilon_{j}\,(1-y_{j+1})}>\frac{\epsilon_{j+1}\,a_{j}}{\epsilon_{j+1}\,a_{j}+\epsilon_{j}\,(1-y_{j+1})}      
 \geq \frac{\epsilon_{j+1}\,a_{j}}{\epsilon_{j+1}\,a_{j}+\epsilon_{j}\,(1-b_{j+1})}=b_{j}.
 \end{equation}
In a similar way, one can use Eq. \eqref{stationary1} to prove the thesis for all $i>j$, and Eq. \eqref{stationary3} to prove the thesis for all $i<j$.
\end{proof}

In order to have a solution $(\mathbf{\tilde{a}},\mathbf{\tilde{b}})$ such that 
$\tilde{a}\neq a$ and $\tilde{b}\neq b$, there must exist at least one component $j$ such that $\tilde{a}_{j}\neq a_{j}$
or $\tilde{b}_{j}\neq b_{j}$;
from the inequalities \eqref{inequality1}-\eqref{inequality2} and Property \ref{property1}, we also have $a_{i}<\tilde{a}_{i}<1$ or
$b_{i}<\tilde{b}_{i}<(e_m-e_i)/(e_m-e_{i-1})$ for $i=1,\dots,m$.

\begin{property}
If $y_{i}>0\,\forall i=2,3,...,m-1$, for each solution $(\mathbf{x},\mathbf{y})$ of the Eqs. \eqref{stationary1}-\eqref{stationary4},
the value of $y_{m-1}$ uniquely determines the values of all the other components $\{x_{i}\}_{i=1}^{m-1}$
and $\{y_{i}\}_{i=1}^{m-2}$ of the solution. On the other hand, $y_{m-1}$ is uniquely determined by the other components $\{x_{i}\}_{i=1}^{m-1}$
and $\{y_{i}\}_{i=1}^{m-2}$ of the solution.
\label{property2}
\end{property}

\begin{proof}
The former statement of the Property follows from the fact that if $y_{i}\neq 0\, \forall i=2,...,m-1$ and we know the last component $y_{m-1}$ of the solution,
we can then compute all the other components of the solution and they uniquely depend on $y_{m-1}$.
Suppose indeed that we know the value of $y_{m-1}$. We can then invert Eq. \eqref{stationary4} and compute
$x_{m-1}=\epsilon_{m-1}y_{m-1}/\epsilon_m(1-y_{m-1})$,
and then plug the obtained $x_{m-1}$ value into Eq. \eqref{stationary1} to compute $x_{m-2}$, and then plug the obtained $x_{m-2}$ into
Eq. \eqref{stationary3} to compute $y_{m-2}$, and so on.
The latter statement follows from Property \ref{property1} (or equivalently, from 
the invertibility of all the relations involved in Eqs. \eqref{stationary1}-\eqref{stationary4}).
\end{proof}

As a consequence of this property, proving that the solution $(\mathbf{x},\mathbf{y})=(\mathbf{a},\mathbf{b})$ 
is unique is equivalent to proving that for a solution,
the only acceptable value of $y_{m-1}$ is $y_{m-1}=b_{m-1}$.

We will now prove the theorem in two steps:
\begin{enumerate}
 \item We transform Eqs. \eqref{stationary1}-\eqref{stationary4} 
 into a set of equations, hereafter referred to as the transformed equations.
\item We use a reductio ad absurdum and assume that there exists a solution $y=\tilde{b}$ of the original equations such that $\tilde{b}_{N-1}>b_{N-1}$.
 We use then the transformed equations to show that the solution $y=\tilde{b}$ cannot be a solution of the original equations, which proves the thesis.
 \end{enumerate}

\subsection{Step 1: Deriving a set of transformed equations}

First, we merge the equations \eqref{stationary1}-\eqref{stationary4} into two equations
\begin{equation}
 x_{i}=\frac{y_{i}}{y_{i}+\frac{\delta_{i+1}}{\delta_{i}}\,(1-x_{i-1})},\,\,\,(i=1,\dots m-1)
 \label{stationary1_new}
\end{equation}
\begin{equation}
 y_{i}=\frac{x_{i}}{x_{i}+\frac{\epsilon_i}{\epsilon_{i+1}}(1-y_{i+1})},\,\,\,(i=1,\dots m-1)
 \label{stationary2_new}
\end{equation}
with $x_{0}=0$ and $y_{m}=0$.
Consider a generic solution $(\mathbf{\bar{x}},\mathbf{\bar{y}})$ of Eqs. \eqref{stationary1_new}-\eqref{stationary2_new}.
Instead of the variables $\{x_{1},\dots, x_{m-1}\}$  and $\{y_{1},\dots, y_{m-1}\}$,
we consider the transformed variables $\{x'_{1},\dots, x'_{m}\}$  and $\{y'_{1},\dots, y'_{m}\}$
defined by the transformation 
\begin{equation} 
\begin{split} 
x'_{i}&=x_{i-1} \,\,\,\,\,\text{for} \,\,\,i=2,\dots,m;\\
y'_{i}&=y_{i-1} \,\,\,\,\,\text{for} \,\,\,i=2,\dots,m.
\end{split}
\label{transformation}
\end{equation}
We consider the transformed equations
\begin{equation}
 x'_{i}=\frac{y'_{i}}{y'_{i}+\frac{\delta'_{i+1}}{\delta'_{i}}\,(1-x'_{i-1})},\,\,\,(i=1,\dots m),
 \label{transformed1}
\end{equation}
\begin{equation}
 y'_{i}=\frac{x'_{i}}{x'_{i}+\frac{\epsilon'_i}{\epsilon'_{i+1}}(1-y'_{i+1})},\,\,\,(i=1,\dots m),
 \label{transformed2}
\end{equation}
with $x'_{0}=0=y'_{m+1}=0$, $\delta'_{i}=\delta_{i-1}$ and $\epsilon'_{i}=\epsilon_{i-1}$ for $i=2,\dots,m+1$.
In the tranformed equations, $x'_{1}$ and $y'_{1}$ are new variables; 
for a solution $(\mathbf{\bar{x}},\mathbf{\bar{y}})$ of Eqs, \eqref{stationary1_new}-\eqref{stationary2_new},
the pair of transformed vectors  $(\mathbf{\bar{x}'},\mathbf{\bar{y}'})$ satisfies the following set of transformed equations
only if $\bar{x}'_1=\bar{y}'_1=0$ for a solution $(\mathbf{\bar{x}},\mathbf{\bar{y}})$.
The values of $\delta'_{1}$ and $\epsilon'_{1}$ only affect the values of $x'_{1}$ and $y'_{1}$, which must be equal to zero for 
the transformed $(\mathbf{\bar{x}'},\mathbf{\bar{y}'})$ of a solution $(\mathbf{\bar{x}},\mathbf{\bar{y}})$ of the original equations.
This allows us to let $\delta'_{1}$ and $\epsilon'_{1}$ be arbitrary parameters in the Eqs. \eqref{transformed1}-\eqref{transformed2}.
Eqs. \eqref{transformed1}-\eqref{transformed2} have the same form of Eqs. \eqref{stationary1_new}-\eqref{stationary2_new}.
It is possible to show by substitution that a possible solution of Eqs. \eqref{transformed1}-\eqref{transformed2} is
\begin{equation}
 \bar{x}'_{i}=\frac{e'_{m+1}\,d'_{i}-e'_i\,d'_{m+1}}{e'_{m+1}\,d'_{i+1}-e'_i\,d'_{m+1}},
 \label{a_transf}
 \end{equation}
 \begin{equation}
   \bar{y}'_{i}=\frac{e'_{m+1}\,d'_{i}-e'_i\,d'_{m+1}}{e'_{m+1}\,d'_{i}-e'_{i-1}\,d'_{m+1}},
   \label{b_transf}
\end{equation}
where $d'_{i}=\delta'_{1}+\sum_{j=2}^{i}\delta'_{j}$ and $e'_{i}=\epsilon'_{1}+\sum_{j=2}^{i}\epsilon'_{j}$.
The $m$-th component of this solution is 
\begin{equation}
\bar{y}'_m=\frac{e'_{m+1}d'_{m}-e'_md'_{m+1}}{e'_{m+1}d'_{m}-e'_{m-1}d'_{m+1}}=\frac{(\epsilon'_{1}+e_m)(\delta'_1+d_{m-1})-(\epsilon'_{1}+e_{m-1})(\delta'_1+d_{m})}{(\epsilon'_{1}+e_m)(\delta'_1+d_{m-1})-(\epsilon'_{1}+e_{m-2})(\delta'_1+d_{m})}.
\end{equation}
We are interested in the 
solutions $(\bar{x}',\bar{y}')$ of Eqs. \eqref{transformed1}-\eqref{transformed2} such that $(\mathbf{\bar{x}},\mathbf{\bar{y}})$
is solution of Eqs. \eqref{stationary1_new}-\eqref{stationary2_new}, where the transformation $(\mathbf{\bar{x}'},\mathbf{\bar{y}'})\to(\mathbf{\bar{x}},\mathbf{\bar{y}})$
is given by Eq. \eqref{transformation}.
We can then pose $\bar{y}'_m=\bar{y}_{m-1}$ and $\epsilon'_{1}=1$, and obtain
\begin{equation}
 \delta'_1=\frac{(1+e_m)\,\delta_m\,(1-\bar{y}_{m-1})}{\epsilon_m-(\epsilon_m+\epsilon_{m-1})\,\bar{y}_{m-1}}-d_m.
 \label{deltaprime}
\end{equation}

\subsection{Step 2: Reductio ad absurdum}

Up to now, we have proven for a solution $(\mathbf{\bar{x}'},\mathbf{\bar{y}'})$ of the Eqs. \eqref{transformed1}-\eqref{transformed2} 
in the form given by Eqs. \eqref{a_transf}-\eqref{b_transf}, the pair of vectors $(\mathbf{\bar{x}},\mathbf{\bar{y}})$ obtained by the transformation \eqref{transformation}
is a solution of 
Eqs. \eqref{stationary1_new}-\eqref{stationary2_new} if and only if $\bar{x}'_{1}=\bar{y}'_{1}=0$,
$\delta'_{1}$ satisfies Eq. \eqref{deltaprime} and $\epsilon'_{1}=1$.
We will now show that if we consider a solution of the transformed equations such that $\bar{y}'_{m}=\tilde{b}'_{m}=\tilde{b}_{m-1}>b_{m-1}$
and assume that $(\mathbf{\tilde{a}},\mathbf{\tilde{b}})$ is a solution of the original equations,
then the first component $\tilde{a}'_{1}$ of the solution is different from zero, which is absurd.
As a consequence, $(\mathbf{\bar{x}},\mathbf{\bar{y}})=(\mathbf{a},\mathbf{b})$ is the only solution
of Eqs. \eqref{stationary1_new}-\eqref{stationary2_new}.

\begin{proof}
We assume that there exists a solution $\bar{y}_{m-1}=\tilde{b}_{m-1}>b_{m-1}$; from Property \ref{property2}, all its components $(\mathbf{\tilde{a}},\mathbf{\tilde{b}})$
are uniquely determined by $\tilde{b}_{m-1}$.
Using the solution \eqref{a_transf}-\eqref{b_transf} of the transformed equations,
from $\tilde{b}_{m-1}>b_{m-1}$ and Eq. \eqref{deltaprime} it follows that 
$\delta'_{1}>d_{m}/e_{m}$.
To prove the thesis, we start by showing that $e'_{m+1}\,d'_i>e'_i\,d'_{m+1}$. For $i=2,3,...m$, we have
\begin{equation*}
e'_{m+1}d'_i-e'_id'_{m+1}=(1+e_m)(d_{i-1}+\delta'_1)-(1+e_{i-1})(\delta'_1+d_m)
\end{equation*}
\begin{equation*}
=(1+e_m)d_{i-1}+(e_m-e_{i-1})\delta'_1-(1+e_{i-1})d_m
\end{equation*}
\begin{equation*}
>(1+e_m)\frac{e_{i-1}}{e_m}d_m+(e_m-e_{i-1})\frac{d_m}{e_m}-(1+e_{i-1})d_m=0.
\end{equation*}
For $i=1$, $e'_{m+1}d'_1=\delta'_1+e_m\delta'_1>\delta'_1+d_m=d'_{m+1}=e'_1d'_{m+1}$, which implies
 $e'_{m+1}d'_i>e'_id'_{m+1}$ for all $i=1,2,...,m$; as a consequence, $a'_{1}>0$, which is absurd.
\end{proof}

\section{The dataset}
\label{dataset}

We use the NBER-UN dataset which has been cleaned and further described in \cite{feenstra2005world}.
We take into account the same list of $N=132$ countries described in \cite{hidalgo2007product}.
For products, we used the same cleaning procedure of ref. \cite{vidmer2015prediction}: we removed
aggregate product categories and products with zero total export volume for a given year and
nonzero total export volume for the previous and the following years.
Products and countries with no entries after year 1993 have been removed as well.
After the cleaning procedure, the dataset consists of $M=723$ products.
To decide if we consider country $i$ to be an exporter of product $\alpha$ or not, we use the Revealed Comparative
Advantage (RCA) \cite{balassa1965trade} which is defined as
\begin{equation}
RCA_{i\alpha}=\frac{e_{i\alpha}}{\sum_\beta e_{j\beta}} \Bigg/ \frac{\sum_j
e_{j\alpha}}{\sum_{j\beta}e_{j\beta}},
\end{equation}
where $e_{i\alpha}$ is the volume of product $\alpha$ that country $i$ exports measured in thousands of US dollars.
RCA characterizes the relative importance of a given export volume of a product
by a country in comparison with this product's exports by all other countries.
We use the bipartite network representation introduced in \cite{hidalgo2009building}, where two kinds of nodes represent countries and products, respectively.
All country-product pairs with RCA values above a threshold value--set to $1$ here--are consequently joined by links between the corresponding nodes in
the bipartite network.

\section{Spearman's correlation coefficient $\rho$}

Given two variables $\mathbf{X}=\{X_{1},\dots,X_{n}\}$ and $Y=\{Y_{1},\dots,Y_{n}\}$,
we rank them in decreasing order and denote by $\mathbf{x}=\{x_{1},\dots,x_{n}\}$ and $\mathbf{y}=\{y_{1},\dots,y_{n}\}$
their corresponding ranking scores.
Equal scores are assigned equal ranking positions given by their average ranking position:
for instance, if the fourth and the fifth scores in the ranking are equal to each other, then they are both assigned
ranking score equal to $(4+5)/2=4.5$. 
The Spearman's correlation coefficient $\rho(\mathbf{X},\mathbf{Y})$ is then defined as the 
linear correlation coefficient between the ranking scores, which reads \cite{spearman1904proof}
\begin{equation}
 \rho(\mathbf{X},\mathbf{Y})=\frac{\sum_{i=1}^{n}(x_{i}-\overline{x})(y_{i}-\overline{y}))}{\sqrt{\sum_{i=1}^{n}(x_{i}-\overline{x})^2 \, \sum_{i=1}^{n}(y_{i}-\overline{y})^2 }},
\end{equation}
where $\overline{x}=n^{-1}\sum_{i=1}^{n}x_i$ denotes the mean of $\mathbf{x}$.

\section{Convergence of the metrics in real data}
\label{convergence}

 \begin{figure}[h]
 \centering
 \includegraphics[height=0.45\columnwidth,  angle=270]{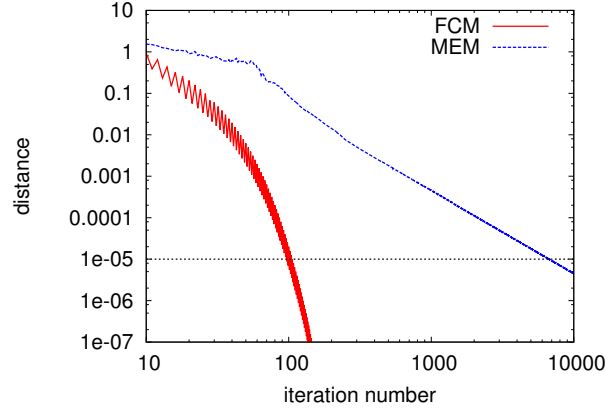}
 \caption{$d^{(n)}$ as a function of the iteration number for the country-product matrix represented in Fig. \ref{fig:matrix}.
 \label{fig:convergence}}
\end{figure}

In a perfectly nested matrix, the score ratios converge to a finite value both for the MEM (Eq. \eqref{solution_minimal}) and for the 
FCM (Eq. \eqref{solution_fitness}).
While real matrices are not perfectly nested,
one can conjecture that if the matrix is not too sparse,
the convergence behavior of a real matrix will be similar.
Motivated by this assumption, we define a convergence criterion based on the score ratio,
and decide to halt iterations when the following criterion is satisfied
\begin{equation}
 d^{(n)}=\sum_{i=1}^{N} \Bigg| \frac{F_{i}^{(n)}}{F_{i+1}^{(n)}}-\frac{F_{i}^{(n+1)}}{F_{i+1}^{(n+1)}}\Bigg|<\epsilon=10^{-5}.
 \label{convergence_criterion}
\end{equation}
For the country-product matrix shown in Fig. \ref{fig:matrix}, 
condition \eqref{convergence_criterion} is satisfied after $107$ and $6700$ iterations for the FCM and the MEM, respectively
(see Fig. \ref{fig:convergence}). 
For the FCM, we find that no fitness ratios converge to zero.
This is in agreement with our analytic results (see condition \eqref{nocrossing1}) and with the results of ref. \cite{pugliese2014convergence},
since the diagonal of the matrix never crosses the empty region of the matrix, as shown in the left panel of Fig. \ref{fig:matrix}.
For the MEM, we find that three fitness ratios converge to 
zero\footnote{We checked that the MEM fitness scores that were converging to zero were still larger than zero after $6700$ iterations.},
which slows down the convergence of the metric.




\section{Dependence of the convergence iteration of the metrics on network size}
\label{complexity}

 \begin{figure}[h]
 \centering
 \includegraphics[height=0.32\columnwidth,  angle=270]{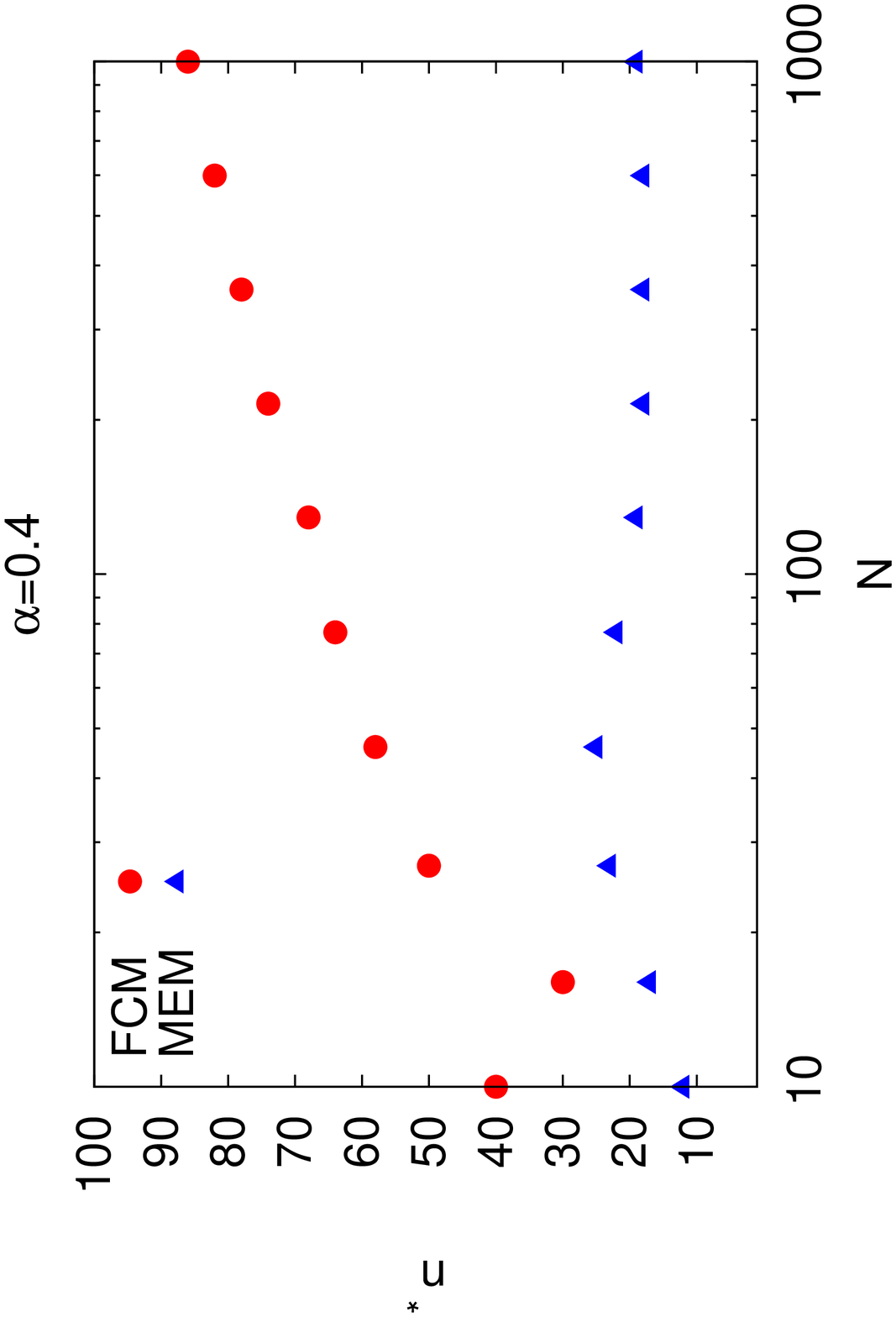}\includegraphics[height=0.32\columnwidth,  angle=270]{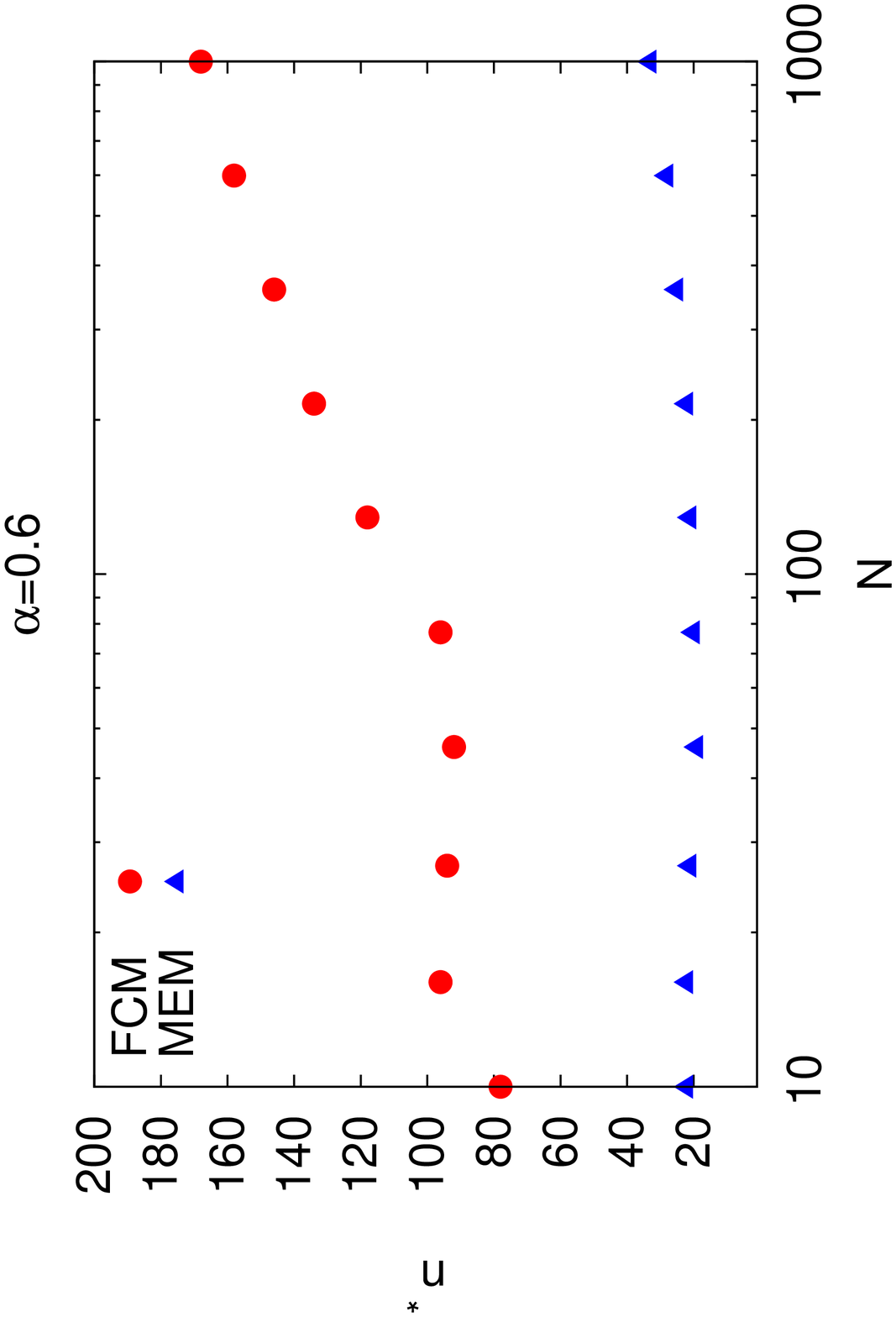}\includegraphics[height=0.32\columnwidth,  angle=270]{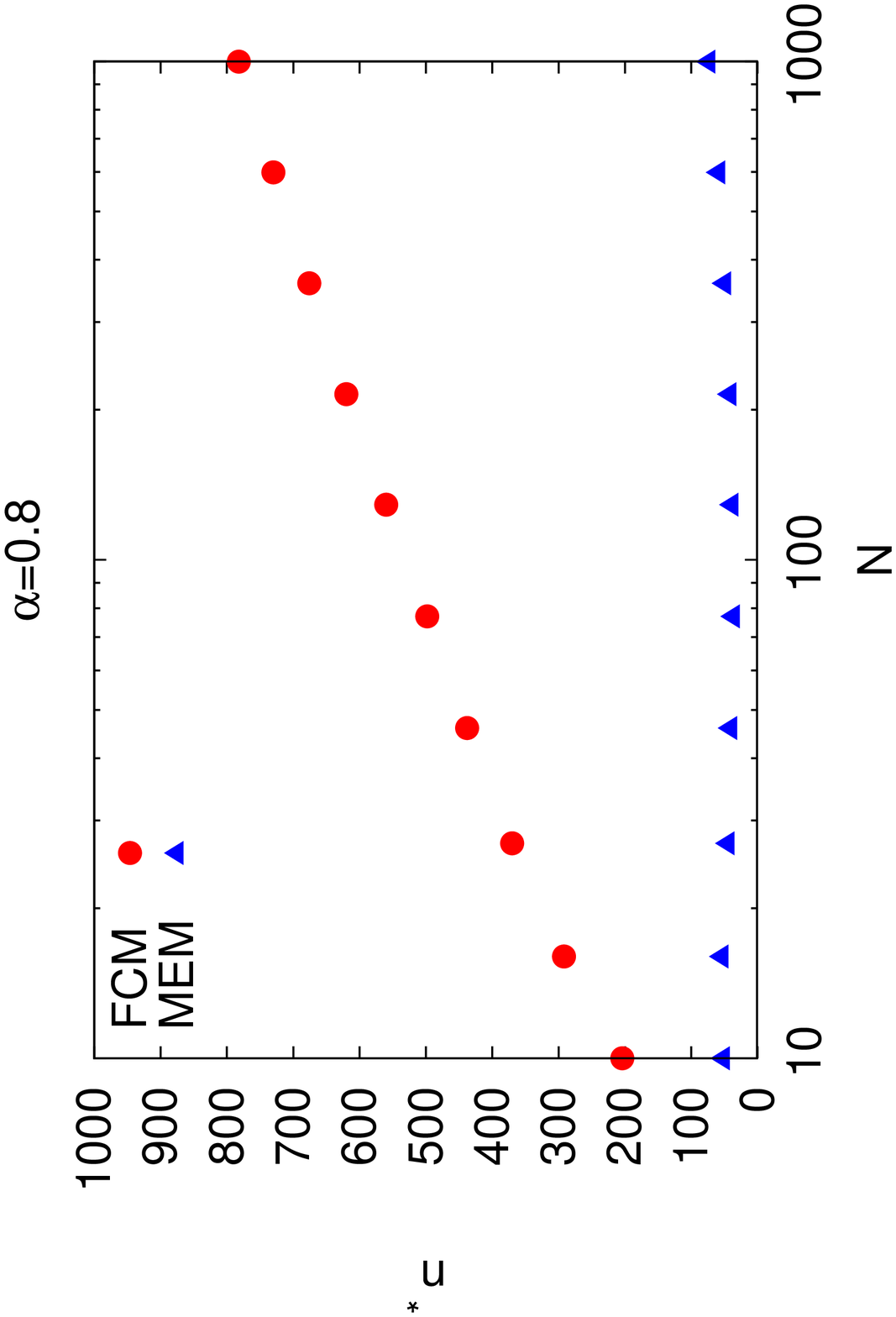}
 \caption{Convergence iteration $n^{*}$ as a function of size $N$ for artificial matrices generated 
 according to Model A described in \ref{complexity}.
 \label{fig:modelAresults}}
\end{figure}

 \begin{figure}[h]
 \centering
 \includegraphics[height=0.25\columnwidth,  angle=0]{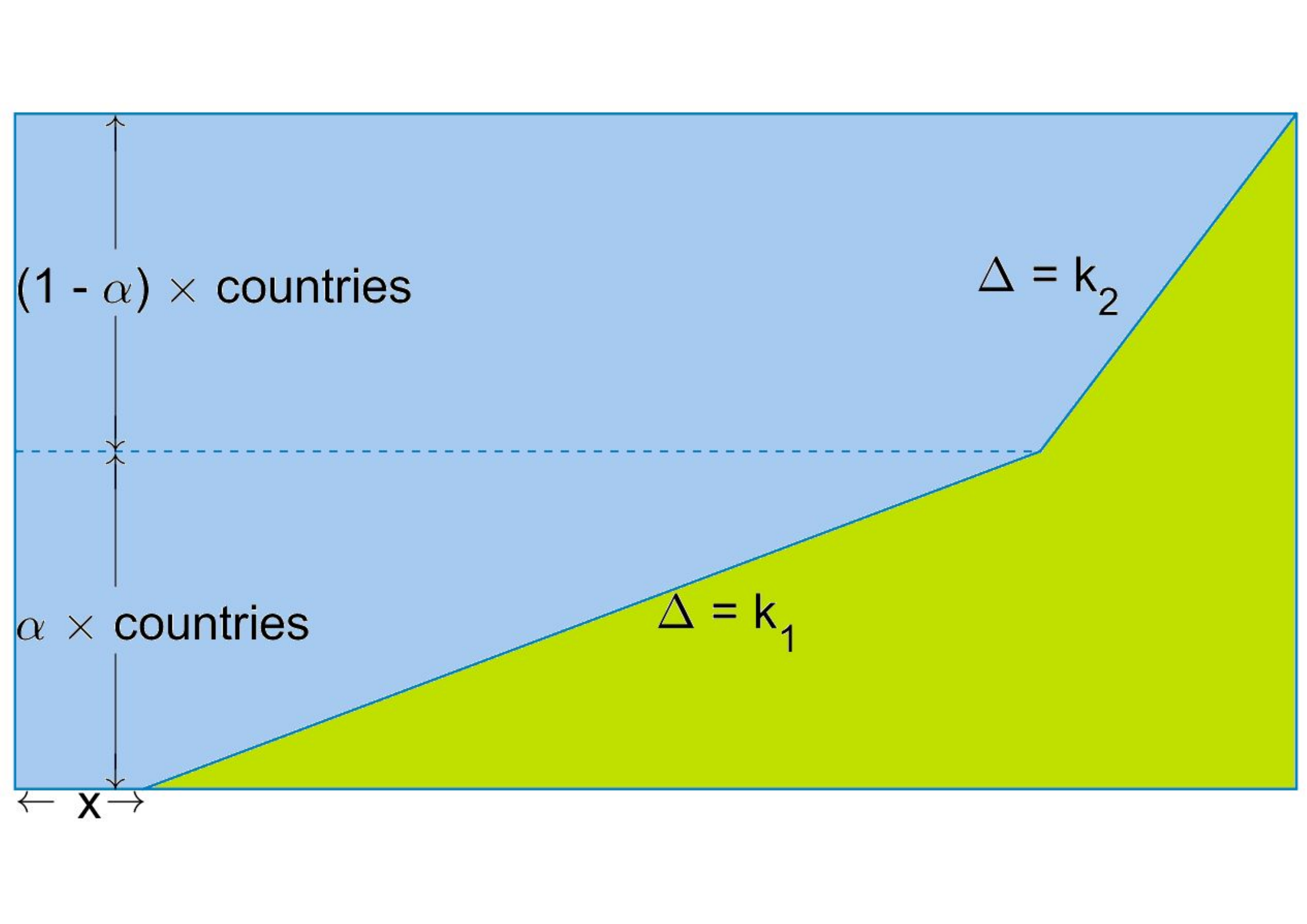}
 \caption{An illustration of model B described in \ref{complexity}.
 \label{fig:modelB}}
\end{figure}

 \begin{figure}[h]
 \centering
 \includegraphics[height=0.35\columnwidth,  angle=270]{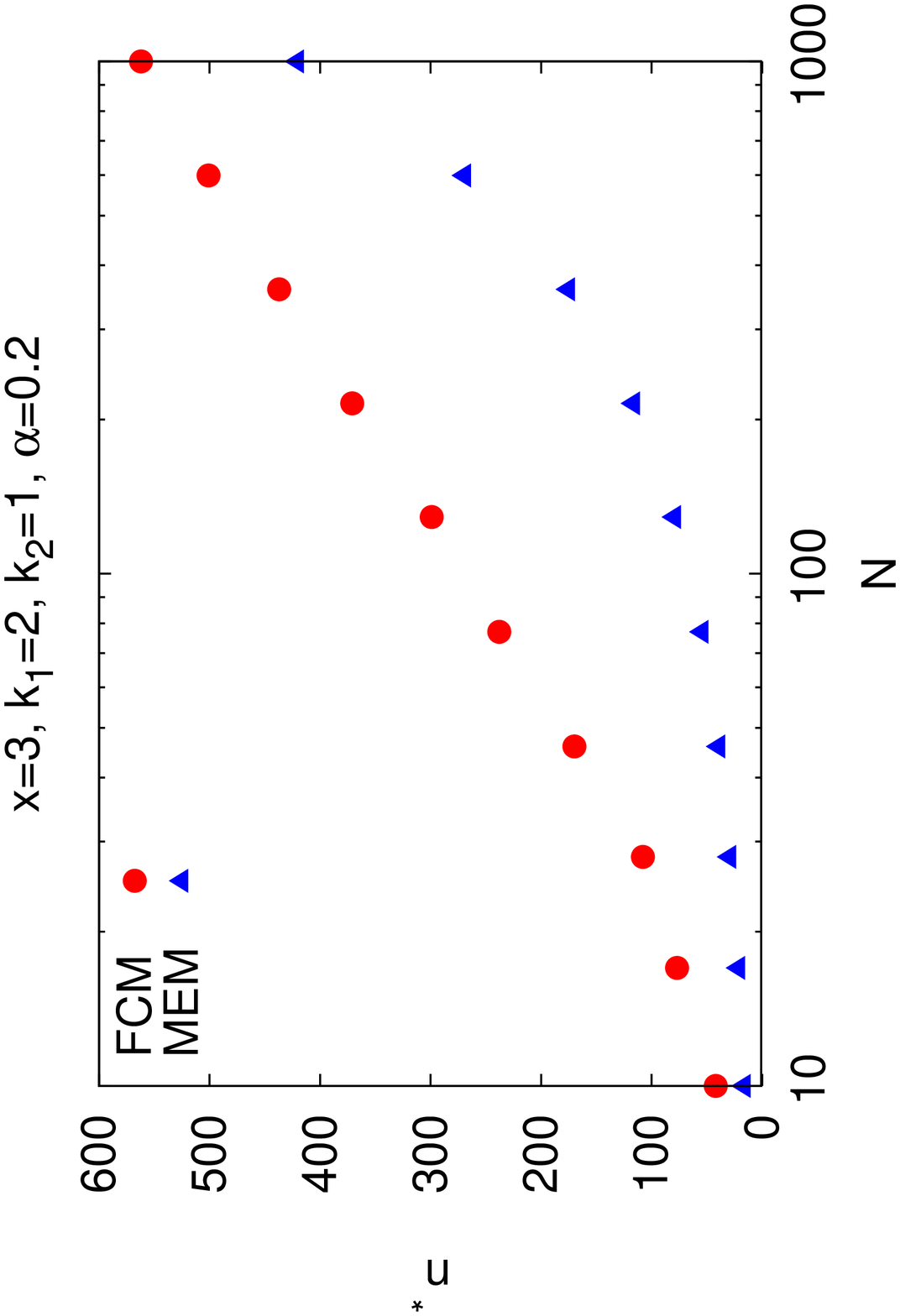}\includegraphics[height=0.35\columnwidth,  angle=270]{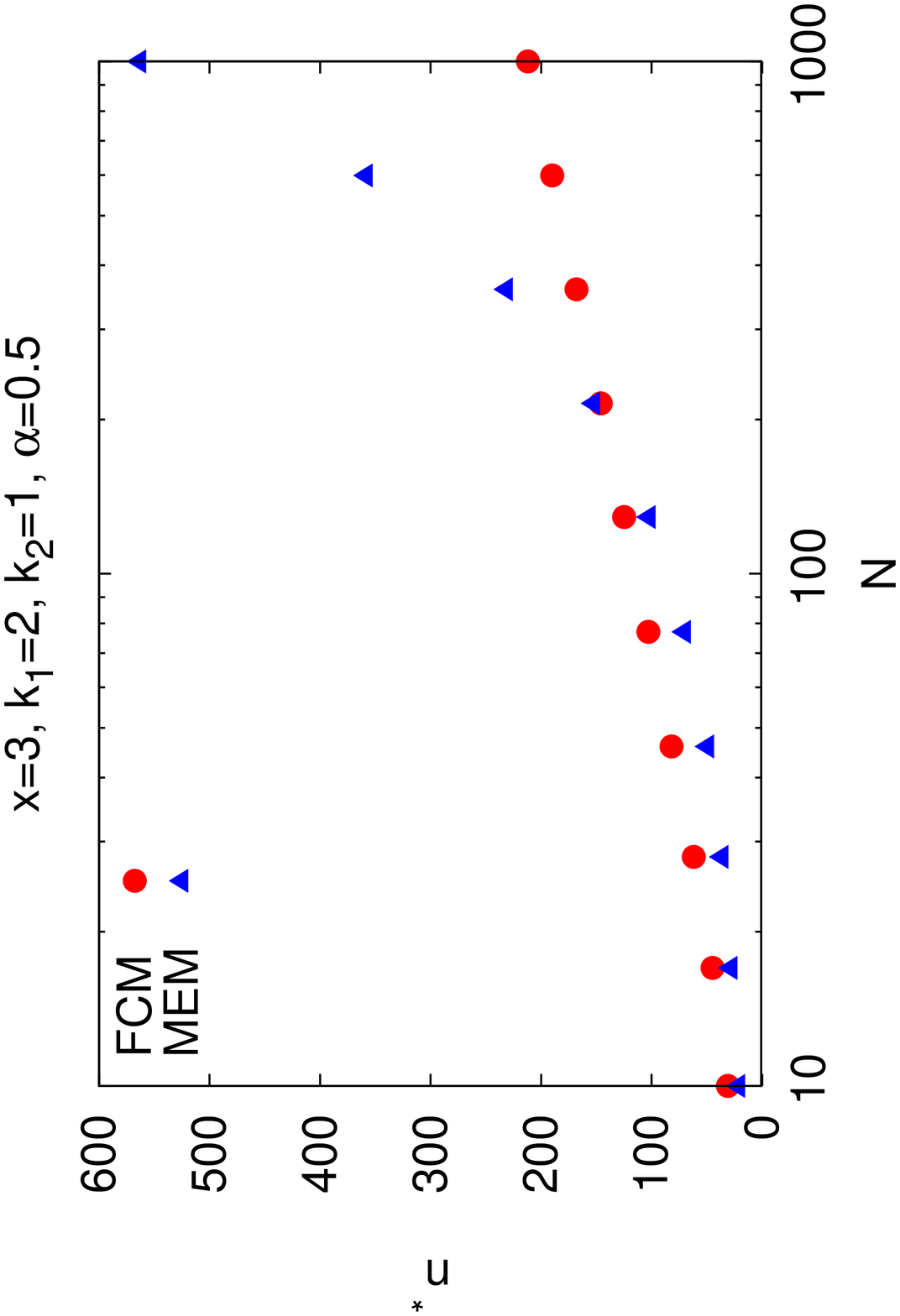}
 \includegraphics[height=0.35\columnwidth,  angle=270]{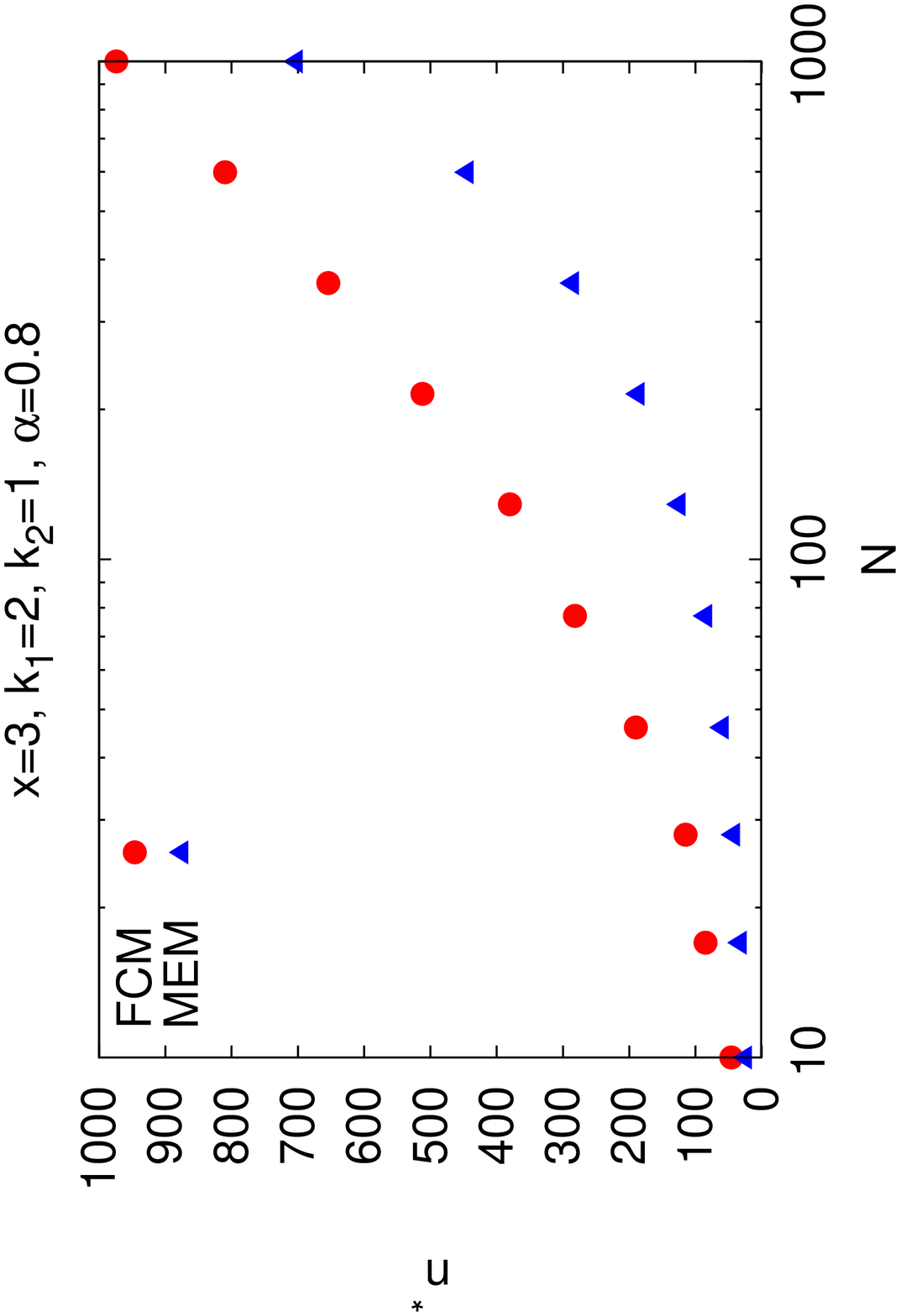}\includegraphics[height=0.35\columnwidth,  angle=270]{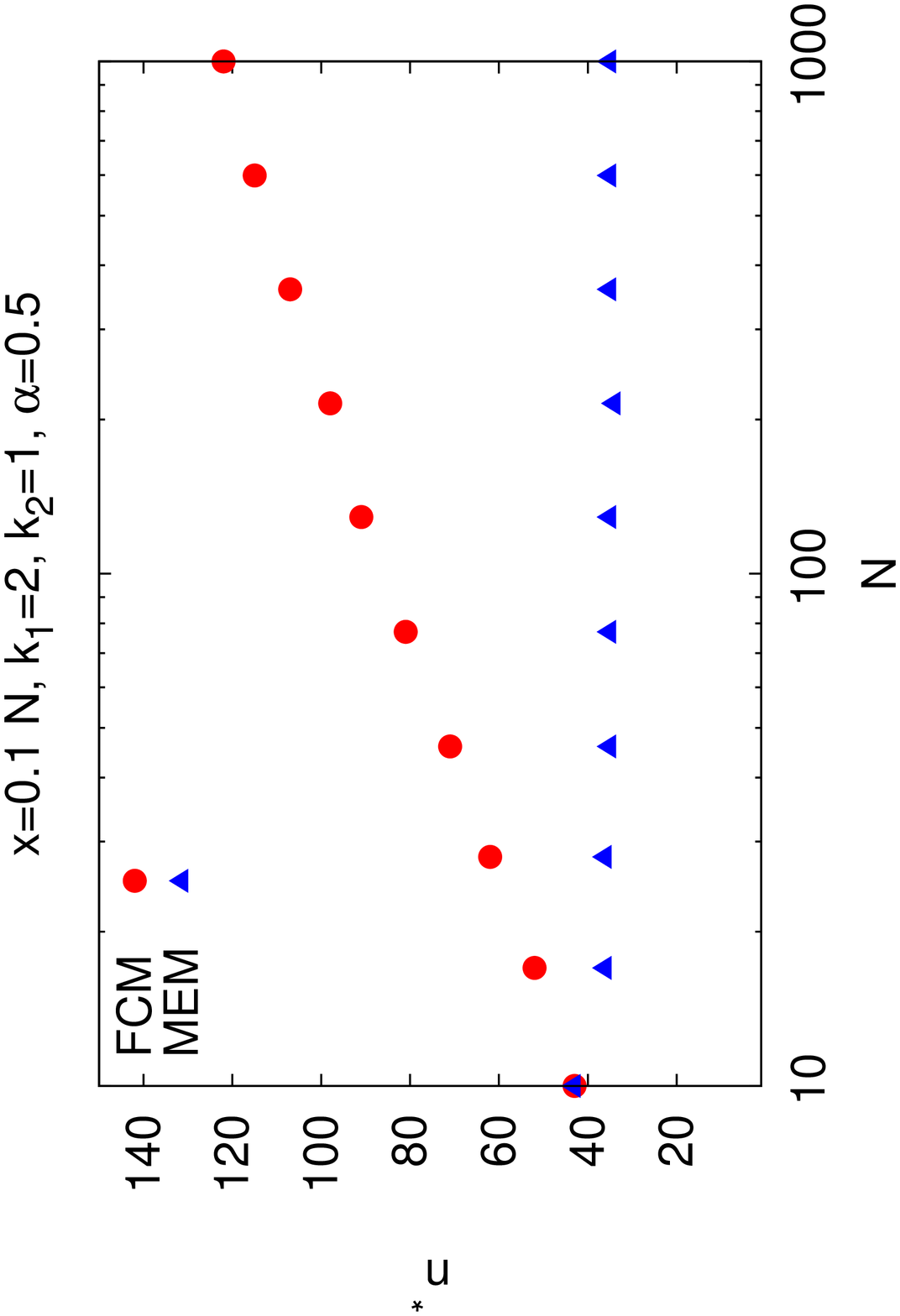}
 \caption{Convergence iteration $n^{*}$ as a function of size $N$ for artificial matrices generated 
 according to Model B described in \ref{complexity} and illustrated in Fig. \ref{fig:modelAresults}.
 The panels show that for the MEM, the dependence of convergence speed on system size strongly depends on the parameters chosen
 to construct the matrix.
 \label{fig:modelBresults}}
\end{figure}

In this section, we build artificial nested matrices to study the dependence of the convergence iteration of the metrics
on network size.
The convergence iteration $n^{*}$ is defined as the smallest iteration such that condition \eqref{convergence_criterion} holds.
We focus on perfectly nested matrices where the diagonal never crosses the empty region of the matrix,
as is the case for the real matrix showed in Fig. \ref{fig:matrix}.
We label the countries in order of increasing diversification and the products in order of decreasing ubiquity.
To generate the matrices, we use two models:
\begin{itemize}
 \item \emph{Model A:} This model has a single parameter
 $\alpha$ which determines the shape of the matrix.
 In order to have the same ratio $M/N$ as in the real data from world trade analyzed in the main text,
we set $M=5.48\,N$.
 For country $i$, we fill the elements corresponding to products $\alpha\in[1,1+ \floor*{M\,i^{\alpha}/N^{\alpha}}]$,
where $\alpha$ is a parameter of the matrix that determines the shape of the border which
separates the empty and the filled regions of the matrix,
and $ \floor*{x}$ denotes the largest integer smaller or equal than $x$.
We restrict our analysis to $\alpha<1$ which corresponds to matrices for which the diagonal does not cross the empty region. 
\item \emph{Model B:} This model has four parameters $x, \alpha, k_1, k_2$ which determine the shape of the matrix.
Fig. \ref{fig:modelB} shows an illustration of a matrix produced with model B.
Country $1$ has diversification equal to $x+k_1$. For the countries $i\in[2, \floor*{\alpha\,(N-1)}]$,
$d_{i+1}=d_i+k_1$ holds; for the remaining countries, $d_{i+1}=d_i+k_2$ holds.
For each value of $N$, the number of products $M$ is determined by the parameters of the model. 
\end{itemize}

Within this framework, we can study the dependence of the convergence speed of the metrics on network size for 
a given shape of the matrix's border.
For Model A, we find that the convergence speed of the MEM does not strongly depend on system size, as 
opposed to the convergence speed of the FCM which grows approximately as $\log{(N)}$ for sufficiently large $N$ (see Fig. \ref{fig:modelAresults}).
For Model B, we find again a logarithmic growth of the convergence iteration for the FCM for sufficiently large $N$,
whereas the behavior of the MEM can be very different with respect to that found for Model A (see Fig. \ref{fig:modelBresults}).
In particular, for some parameter settings the convergence of the MEM is slower than that of the FCM, as found in the real data.
Figs. \ref{fig:modelAresults} and \ref{fig:modelBresults} indicate that the convergence behavior of the MEM is strongly dependent on the details of
the border of the matrix $\mathcal{M}$, as opposed to the FCM which always exhibit asymptotic logarithmic dependence of $n^{*}$ on $N$.
We did not attempt to investigate the convergence behavior of the metric on alternative matrix models. 
We envision that suitable modifications of the equations that define the two metrics would mitigate
the dependence of convergence speed on system size; however, designing new metrics with improved convergence properties goes beyond the scope
of this article.

\section{Dividing the matrix $\mathcal{M}$ into submatrices based on fitness ratio}
\label{submatrices}

 \begin{figure}[h]
 \centering
 \includegraphics[height=0.45\columnwidth,  angle=270]{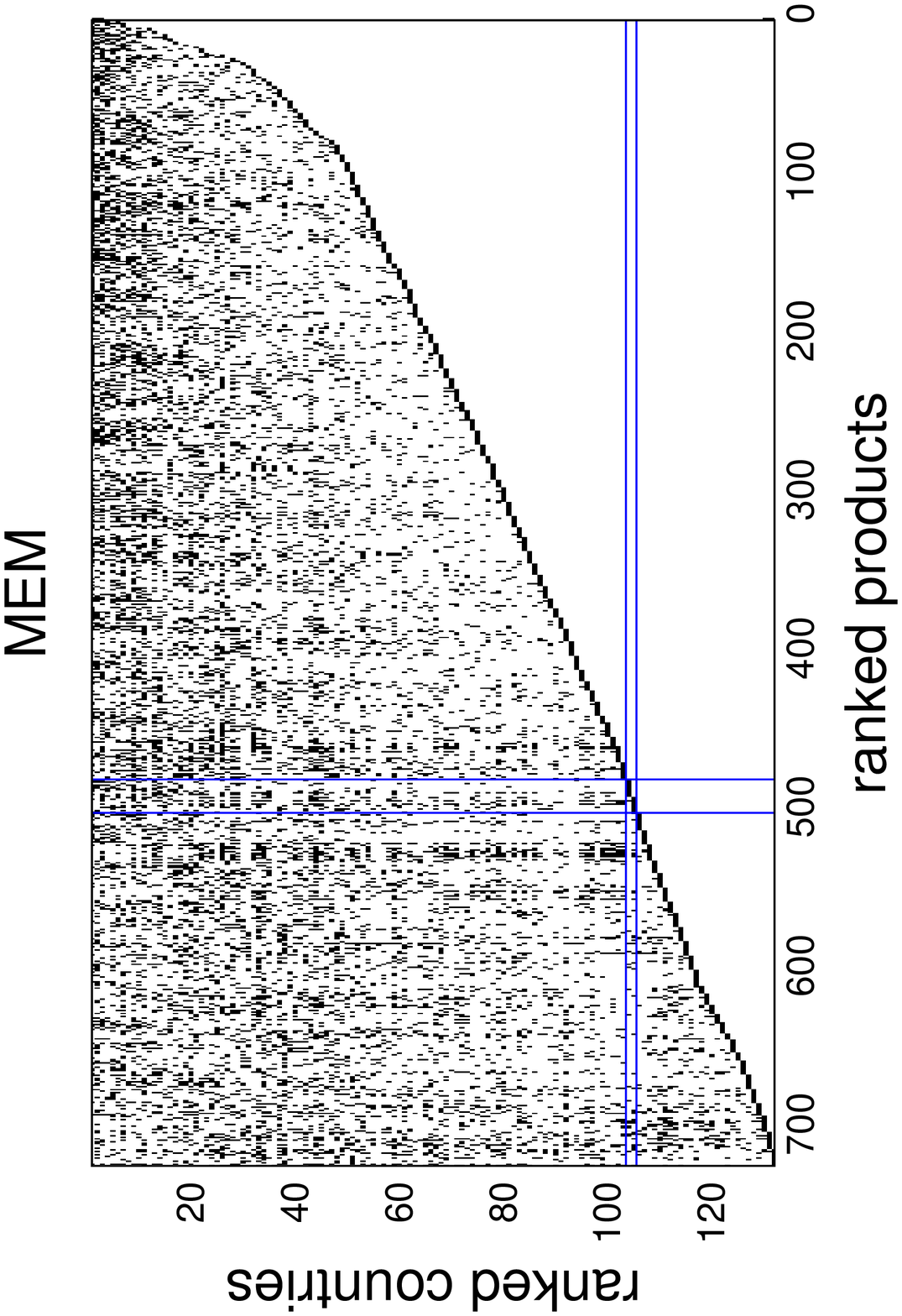}\includegraphics[height=0.45\columnwidth,  angle=270]{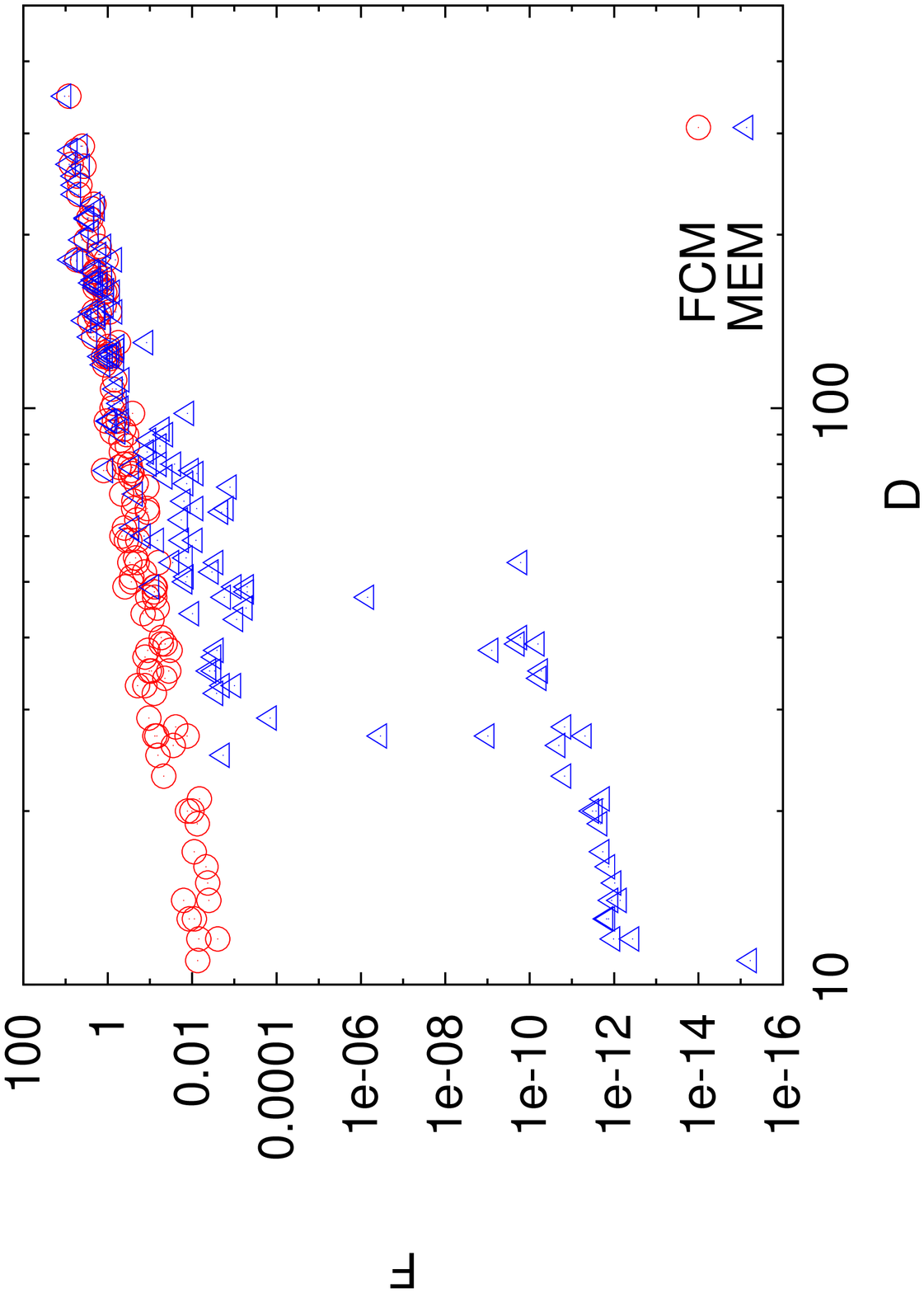}
 \caption{\emph{Left:} Country-product matrices resulting from the MEM (1996); horizontal and vertical blue lines separate
 groups of countries and products, respectively. \emph{Right:} Country fitness score $F$ vs. diversification $D$ for the FCM and the MEM.
 \label{fig:matrix_blocks}}
\end{figure}

When some fitness score ratios converge to zero, the matrix $\mathcal{M}$ can be separated into different groups of countries
such that the score ratios between countries within the same group are always larger than zero.
For the FCM, one or more fitness ratios converge to zero
when the diagonal of the matrix crosses the empty region of the matrix $\mathcal{M}$ (see section \ref{analytics_crossing}
and ref. \cite{pugliese2014convergence}).
For the MEM and for a perfectly nested matrix $\mathcal{M}$, one or more fitness ratios converge to zero
when the diversification gap between two countries $i+1$ and $i$ is equal or larger
than the maximum diversification gap of the lower ranked countries, as directly results from Eq. \eqref{solution_minimal}.
While the criterion for the MEM is not directly applicable to real matrices that are not perfectly nested,
we empirically observe in the dataset used for Fig. \ref{fig:matrix} that the fitness ratios of two pairs of countries converge to 
zero. As suggested in ref. \cite{pugliese2014convergence}, 
we can then separate the countries into three groups such that the fitness ratios are nonzero between any two countries that belong
to the same group.
The three resulting groups are composed of $103$, $2$ and $27$ countries, respectively (see Fig. \ref{fig:matrix_blocks}, left panel).
The right panel of Fig. \ref{fig:matrix_blocks} shows that the separation of countries into different groups
is signaled by discontinuous jumps in the relation between country MEM fitness and country diversification $D$,
which happens for $D<100$.
We emphasize that while the deviation between the trends observed for the FCM and the MEM is relatively small for highly diversified countries, 
it becomes wide for little diversified countries, which might be relevant for the study of the economic complexity dynamics of 
developing countries \cite{cristelli2015heterogeneous}.

\section*{Acknowledgements}

This work was supported by the EU FET-Open Grant No. 611272 (project Growthcom), by the Swiss National Science Foundation (Grant No. 200020-143272),
by the China Scholarship Council (CSC) scholarship and by the National Natural Science Fundation of China (Grant No. 61503140).
We wish to thank Alexandre Vidmer for his careful proofreading of the manuscript and his useful suggestions.
We acknowledge useful discussions with Mat{\'u}{\v{s}} Medo, Emanuele Pugliese, Andrea Zaccaria.

\bibliographystyle{elsarticle-num}

\end{document}